\documentclass{ieeetj}
\usepackage{cite}
\usepackage{amsmath,amssymb,amsfonts}
\usepackage{graphicx,color}
\usepackage{textcomp}
\usepackage{xcolor}
\usepackage{hyperref}
\hypersetup{hidelinks=true}
\usepackage{algorithm,algorithmic}

\AtBeginDocument{\definecolor{tmlcncolor}{cmyk}{0.93,0.59,0.15,0.02}\definecolor{NavyBlue}{RGB}{0,86,125}}
\usepackage{subcaption}
\usepackage[font=small,labelfont=bf]{caption}
\usepackage{amsmath, amsthm}
\usepackage{siunitx}
\usepackage{booktabs}
\newcommand{\figW}{1}
\usepackage{ulem}
\usepackage{cleveref}
\crefname{figure}{Fig.}{Figs.}
\Crefname{figure}{Fig.}{Figs.}

\newcommand{\invsqrtm}[1]{#1^{-\frac{1}{2}}}

\newtheorem{theorem}{Theorem}

\newcommand{\bbar}[1]{\ensuremath{\boldsymbol{\bar{#1}}}}

\def\diag{\text{diag}}
\def\x{{\mathbf x}}

\def\X{\mathbf{X}}
\def\Y{\mathbf{Y}}
\def\A{\mathbf{A}}
\def\S{\mathbf{S}}

\def\p{\mathbf{p}}
\def\I{\mathbf{I}}

\def\v{\mathbf{v}}
\def\g{\mathbf{g}}

\def\R{\mathbf{R}}
\def\a{\mathbf{a}}

\def\r{\mathbf{r}}

\def\g{\mathbf{g}}

\def\Q{\mathbf{Q}}

\def\C{\mathbf{C}}

\def\m{\mathbf{m}}
\def\n{\mathbf{n}}
\def\s{\mathbf{s}}

\def\y{\mathbf{y}}

\def\min{\text{min}}
\def\max{\text{max}}

\def\argmin{\text{argmin}}

\def\log{\text{log}}

\def\OJlogo{\vspace{-4pt}$<$Society logo(s) and publication title will appear here.$>$}
\def\seclogo{\vspace{10pt}$<$Society logo(s) and publication title will appear here.$>$}

\def\authorrefmark#1{\ensuremath{^{\textbf{#1}}}}

\usepackage{xcolor}
\newcommand{\rev}[1]{\textcolor{red}{#1}}

\begin{document}
\receiveddate{XX Month, XXXX}
\reviseddate{XX Month, XXXX}
\accepteddate{XX Month, XXXX}
\publisheddate{XX Month, XXXX}
\currentdate{21 September, 2025}
\doiinfo{XXXX.2022.1234567}

\markboth{Spatial Power Estimation via Riemannian Covariance Matching}{O. Cohen {et al.}}

\title{Spatial Power Estimation via Riemannian Covariance Matching}

\author{Or Cohen\authorrefmark{1}, Alon Amar\authorrefmark{1}, Ronen Talmon\authorrefmark{1}}
\affil{Andrew and Erna Viterbi Faculty of Electrical and Computer Engineering, Technion, Haifa, Israel}
\corresp{Corresponding author: O. Cohen (email: or.cohen@campus.technion.ac.il).}

\begin{abstract}

We propose a new method for spatial power spectrum estimation in array processing that leverages the Riemannian geometry of Hermitian positive definite (HPD) matrices. We show that conventional approaches minimize variants of the Euclidean distance between the sample covariance matrix and a model covariance matrix, without considering \rev{the fact} that covariance matrices lie on the Riemannian manifold of HPD matrices. By exploiting this manifold, we present a Riemannian-aware covariance matching algorithm, termed SERCOM, using the Jensen–Bregman LogDet (JBLD) divergence, which, unlike other Riemannian distances, can be evaluated efficiently without eigen-decomposition. We theoretically compare the JBLD divergence to other Euclidean- and Riemannian-based distances, demonstrating robustness to spectral distortions. Experimental results \rev{demonstrate} that SERCOM consistently outperforms existing methods in direction-of-arrival (DOA) and power estimation, particularly in challenging scenarios with low SNR, limited number of snapshots, and correlated sources.

\end{abstract} 

\begin{IEEEkeywords}
Beamforming, Riemannian geometry, covariance matching, Jensen–Bregman LogDet, matrix manifold. 
\end{IEEEkeywords}

\maketitle
\section{INTRODUCTION}\label{sec:introduction}
Spatial power estimation is a fundamental problem in array signal processing, with a long history across statistical, subspace, and more recently deep learning–based methods \cite{stoica2005spectral,krim2002two,krishnaveni2013beamforming,pesavento2023three,wu2019deep,liu2018direction,papageorgiou2021deep,mylonakis2024novel}. It is closely connected to direction-of-arrival (DOA) estimation, since DOAs typically appear as dominant peaks in the estimated spectrum. Classical beamformers such as the delay-and-sum (Bartlett's beamformer) and the minimum variance distortionless response (MVDR) \cite{capon2005high,salvati2016weighted} rely on a single-source model, while subspace methods such as multiple signal classification (MUSIC) \cite{schmidt1986multiple,mestre2008modified,yan2013low} estimate DOAs directly without power estimation. 
\rev{Recently, deep learning–based approaches have also been proposed for DOA and spatial spectrum estimation, showing strong empirical performance in specific operating regimes, including low-SNR scenarios, when trained on representative datasets \cite{wu2019deep,liu2018direction,papageorgiou2021deep,mylonakis2024novel}. However, they typically require supervised training with labeled data and may need retraining or adaptation when the array geometry, signal model, or operating conditions differ between the training and deployment stages, which limits their applicability in settings with limited training data or varying sensing conditions.}

\rev{Among model-based methods, greedy sparse recovery approaches are often computationally attractive when the number of sources is small \cite{chen2006theoretical}. However, on fine angular grids the steering dictionary can become highly coherent, which may reduce robustness for closely spaced or correlated sources. Moreover, the greedy selection process is susceptible to noise and small sample sizes, which can cause the algorithm to converge to incorrect directions. Accordingly, many} sparse formulations, including regularization-based methods \cite{chen2006theoretical,yang2015enhancing} and sparse iterative covariance-based estimation (SPICE) algorithms \cite{stoica2010spice,stoica2012spice,stoica2014weighted}, adopt a multi-source model, jointly estimating the entire spatial spectrum.  

Many of these approaches can be recast using the covariance matching framework, where the sample covariance matrix $\widehat{\R}$ is matched to a model covariance matrix $\R$ conveying the expected spatial power spectrum at predefined directions \cite{ottersten1998covariance,li1999computationally,wu2017toeplitz}.
This covariance matching approach formulates the estimation of the spatial power as the minimization of a dissimilarity measure between $\widehat{\R}$ and $\R$, often augmented by regularization terms. 
Specifically, in most widely used formulations this dissimilarity reduces to the Euclidean distance between covariance matrices or to a weighted variant of this distance \cite{stoica2010new,stoica2010spice}.

For example, a common dissimilarity criterion in covariance matching is the asymptotic minimum variance (AMV) \cite{ottersten1998covariance,li1999computationally,abeida2012iterative}, which can be interpreted as a generalized least squares (GLS) fit applied to the vectorized form of the covariance matrix $\R$, treating it as a vector in Euclidean space. Under suitable statistical conditions, minimizing this criterion is equivalent to maximum likelihood estimation of the power spectrum, which in turn coincides with the minimum variance unbiased estimator. Two additional examples are the SPICE algorithm \cite{stoica2010spice,stoica2012spice}, which was introduced as a convex approximation of AMV with an efficient iterative solver, and the iterative sparse asymptotic minimum variance (SAMV) algorithm \cite{abeida2012iterative}, which was proposed for the direct minimization of AMV.

\textbf{\textit{Research gap.}} Although effective in some scenarios, these popular covariance matching formulations treat covariance matrices as vectors in Euclidean space, thereby ignoring their intrinsic geometry as elements of the Riemannian manifold of Hermitian positive definite (HPD) matrices \cite{bhatia2009positive}. 
\rev{Specifically, Euclidean-based methods often experience substantially degraded performance when dealing with ill-conditioned sample covariance matrices or in low-snapshot regimes. By contrast, the Riemannian framework offers a more robust alternative by accounting for the curved geometry of covariance matrices, which inherently regularizes the estimation.}
The HPD manifold admits several Riemannian distances that capture this geometry, including the affine-invariant Riemannian metric (AIRM) \cite{pennec2006riemannian}, which is perhaps the most widely studied, as well as other alternatives developed in the literature. This viewpoint has been successfully exploited in fields such as computer vision, neuroscience, and medical data analysis, consistently outperforming Euclidean approaches \cite{chen2018component,jiang2019sleep,chevallier2021review}.
Following this success, in this paper, we propose a Riemannian geometry–aware covariance matching approach for spatial power spectrum estimation. 

\textbf{\textit{Related Work.}}
Recent works have explored leveraging the Riemannian geometry of covariance matrices for DOA and array processing. One line of research focuses on geometry-aware averaging of covariances, e.g., using the Riemannian (Karcher) mean to suppress interferences while preserving shared spectral components, which can be integrated into classical beamformers or DOA estimators \cite{bar2023interference}. 
Another approach formulates DOA estimation directly via Riemannian distances. Early methods compute pseudo-spectra using the AIRM between the sample covariance and a unit-power single-source model, sometimes extended to include source scaling \cite{coutino2016direction,dong2019scaling}. Related strategies construct per-snapshot HPD covariances, aggregate them with the Riemannian mean, and perform 1D search against single-source models \cite{chahrour2019direction,chahrour2021target}. These methods highlight robustness benefits from geometry-aware matching but remain restricted to single-source models and per-direction searches.
In \cite{wang2019grid}, a \rev{gridless} method for sparse linear arrays was proposed, employing covariance matching with the Bures–Wasserstein distance. Instead of directly estimating the power spectrum, it was combined with root-MUSIC for DOA estimation. Recently, \cite{picard2024riemannian} introduced a covariance-matching beamformer based on the Log-Euclidean (LE) metric \cite{arsigny2006log},\cite{arsigny2007geometric} under the single-source covariance model, which allowed a closed-form expression of the beamformer with respect to a steering vector and improved performance compared to classical, Euclidean-based methods.

\textbf{\textit{Our Contributions.}}
We propose \textbf{SERCOM}, Spatial power Estimation via Riemannian COvariance Matching, a new method that formulates a Riemannian covariance matching task for \rev{estimating} the power spectrum across multiple directions on a predefined grid, from which the DOAs are subsequently identified as the dominant peaks. \rev{Unlike previous HPD geometry-aware approaches, we introduce a multi-source model and apply a scalable optimization by minimizing the \emph{Jensen–Bregman LogDet (JBLD)} divergence \cite{cherian2012jensen,sra2012new} between $\R$ and $\widehat{\R}$. This allows for an efficient Riemannian distance-based estimation that maintains computational tractability even as the number of sensors increases.}
We establish an analytical connection between SPICE, AMV, AIRM, and JBLD criteria, clarifying their similarities and differences in the context of spatial power estimation. The proposed SERCOM method with the JBLD criterion outperforms the SAMV and SPICE algorithms while reducing computational complexity compared to the AIRM, yet preserving key aspects of the intrinsic affine-invariant geometry of covariance matrices. Furthermore, unlike approaches restricted to single-source or unit-power models \cite{coutino2016direction,dong2019scaling,chahrour2019direction,picard2024riemannian}, our formulation is general and incorporates a model of the full power spectrum, offering a geometry-aware alternative to traditional Euclidean-based techniques. Experimental results further show that SERCOM is a computationally practical framework for spatial power estimation, yielding improved source power and direction estimates compared to existing methods in challenging scenarios such as low SNR conditions and correlated sources.

\textbf{\textit{Notations.}}
In this paper, bold symbols such as $\x$ and $\A$ represent vectors and matrices, respectively. The subscript $\x_k$ indicates the $k$-th entry of the vector $\x$. The superscript $(\cdot)^H$ denotes the conjugate transpose. The matrix trace is $\mathrm{tr}(\cdot)$, the symbol $(\cdot)^{-1}$ denotes the inverse of a matrix, while $(\cdot)^{1/2}$ and $(\cdot)^{-1/2}$ on a matrix signify the square root of the matrix and its inverse, respectively. The symbol $\log(\cdot)$ can refer to either the natural scalar logarithm or the matrix logarithm, depending on the context, and $\log\left|\cdot\right|$ is used for the log-determinant operation on a matrix. The Frobenius norm is denoted by $\|\cdot\|_F$, while $\|\cdot\|_2$ denotes the $\ell_2$ norm for vectors and the spectral norm for matrices. The Kronecker product is represented by $\otimes$, and $\oslash$ indicates element-wise division between vectors.

\section{BACKGROUND: RIEMANNIAN MANIFOLD OF HPD MATRICES}\label{sec:background}
Complex $n\times n$ signal-plus-noise covariance matrices lie in the space of HPD matrices, denoted as $\mathcal{P}_n$:
\begin{equation}
    \mathcal{P}_n = \left\{ \R \in \mathbb{C}^{n \times n} \mid \R = \R^H, \R \succ 0 \right\}.
\end{equation}

A widely used Riemannian metric in this space is the AIRM \cite{pennec2006riemannian}, which admits a Riemannian manifold structure with well-defined geometry \cite{bhatia2009positive}.  
This Riemannian geometry admits the logarithmic mapping $\text{Log}_\R: \mathcal{P}_n \to T_\R \mathcal{P}_n$ that maps any $\boldsymbol{\Gamma}\in\mathcal{P}_n$ into the tangent space of $\R$:
\begin{equation}
\label{eq:logmap}
\text{Log}_\R(\boldsymbol{\Gamma}) = \R^{1/2} \log\big(\R^{-1/2}\boldsymbol{\Gamma} \R^{-1/2}\big) \R^{1/2}.
\end{equation}
At any $\R \in \mathcal{P}_n$, the AIRM defines the following inner product in the tangent space $T_\R \mathcal{P}_n$:
\begin{equation}
\label{eq:rie_inner_prod}
\langle \X, \Y \rangle_\R = \mathrm{tr}\Big(\R^{-1} \X \R^{-1} \Y\Big), \quad \X, \Y \in T_\R \mathcal{P}_n.
\end{equation}
This inner product induces a norm of the tangent vectors. In turn, this yields the following Riemannian distance between any two HPD matrices $\R_1, \R_2 \in \mathcal{P}_n$:
\begin{equation}\label{eq:AIRM_Dist}
\mathcal{D}^2_{\mathrm{AIRM}}(\R_1,\R_2) = \big\|\log(\R_1^{-1/2} \R_2 \R_1^{-1/2})\big\|_F^2,
\end{equation}
which is the squared length of the shortest geodesic connecting $\R_1$ and $\R_2$ on the manifold.

Since computing the Riemannian distance induced by the AIRM \eqref{eq:AIRM_Dist} requires matrix square roots and matrix logarithms via 
eigen-decomposition, \rev{which are computationally intensive operations}, several approximations have been proposed to offer more efficient alternatives while approximately preserving the underlying geometry.
One widely used approximation is the LE distance \cite{arsigny2007geometric}:
\begin{equation}\label{eq:LE_Dist}
        \mathcal{D}^2_{\text{LE}}(\R_1,\R_2) = \|\log(\R_1) - \log(\R_2)\|^2_F,
\end{equation}
where the matrix logarithm maps covariance matrices into a Euclidean space, allowing the standard Euclidean distance to be applied. \rev{However, the LE metric requires the computation of matrix logarithms, which incurs significant computational overhead and may consequently hinder the efficiency of downstream tasks.}

Another geometry-aware alternative to the AIRM is the JBLD \cite{cherian2012jensen}. It is obtained by the symmetrized Bergman divergence \cite{banerjee2005clustering}\cite{nielsen2007centroids}, and defined by
\begin{equation}
        \label{eq:JBLD}
        \mathcal{D}^2_{\mathrm{JBLD}}(\R_1,\R_2) = \log\left|\frac{\R_1 + \R_2}{2}\right| - \frac{1}{2}\log|\R_1 \R_2|.
\end{equation}
It was proposed in \cite{cherian2012jensen} for fast nearest-neighbor retrieval in a \rev{dataset of covariance matrices} and was shown to outperform LE and other geometry-aware distances in applications involving covariance similarity measures. In \cite{sra2012new}, it was shown that the square root of the divergence $\mathcal{D}_{\mathrm{JBLD}}$ is a distance, with tight links to the AIRM and its desirable properties.
Moreover, unlike AIRM and LE, JBLD avoids the computationally intensive matrix logarithm for distance and gradient evaluation, and can be efficiently and stably computed \rev{via} Cholesky factorization \cite{cherian2012jensen}.

\section{PROBLEM FORMULATION AND EXISTING METHODS} \label{sec:problem_formulation}
Consider an $M$-sensor array with arbitrary but known configuration, receiving $K$ far-field, narrowband sources at a common carrier frequency, impinging from directions $\phi_1,\ldots,\phi_K$ with respective powers $\sigma_1^2,\ldots,\sigma_K^2$. Given $N$ snapshots collected by the array, the goal is to accurately estimate the directions and powers of the sources, while the number of sources $K$ is unknown. The snapshot of the received signals $\y(t) \in \mathbb{C}^M$ at all sensors is:
\begin{equation}
    \begin{aligned}
        &\y(t) = \sum_{k=1}^{K}{\a_{\phi_k} s_k(t) + \n(t)} \\
        &= \A_{\boldsymbol{\phi}}\s(t) + \n(t), \;\; t = 1,...,N \\
    \end{aligned}
\end{equation}
where $\A_{\boldsymbol{\phi}} = [\a_{\phi_1},\ldots,\a_{\phi_K}] \in \mathbb{C}^{M\times K}$ is the steering matrix, with each column being a steering vector $\a_{\phi_k}$, a known function of $\phi_k$ given by the carrier frequency and the array configuration. The vector $\s(t) = [s_1(t),\ldots,s_K(t)]^T \in \mathbb{C}^{K}$ is the concatenation of all incident signals, where $s_k(t)$ is the complex signal emitted by the $k$-th source with zero mean and power $\sigma_k^2$. The additive noise $\n(t) \sim \mathcal{CN}(\boldsymbol{0},\sigma_n^2\I)$ has a known power $\sigma_n^2$. It is assumed that $\s(t)$ and $\n(t)$ are independent and that $\mathbb{E}[\s(t)\s(\tilde{t})^H] = \delta_{t,\tilde{t}}\boldsymbol{\Sigma}$, where $\boldsymbol{\Sigma} = \diag(\sigma_1^2,...,\sigma_K^2)$, where $\delta_{t,\tilde{t}}$ is 1 only if $t = \tilde{t}$ and 0 otherwise.

A common line of direction and power estimation methods, termed 
on-grid methods \cite{stoica2005spectral}, defines an angular grid with $D$ examined directions, $\boldsymbol{\theta} = [\theta_1,\ldots,\theta_D]^T$ and estimates the power at each grid point. This can be formulated as a covariance matching problem \cite{ottersten1998covariance,stoica2010spice}, where the population covariance $\R=\mathbb{E}[\y(t)\y(t)^H]$ is parameterized by the spatial power spectrum and matched to the sample covariance $\widehat{\R} = \tfrac{1}{N}\sum_{t=1}^N \y(t)\y(t)^H$, under the standing assumption that $N \ge M$, so that $\widehat\R$ is HPD.

Specifically, the on-grid \emph{multi-source} covariance model incorporates all steering vectors of the angular grid:
\begin{equation}\label{eq:covariance_model}
    \R(\p;\boldsymbol{\theta})
    = \A_{\boldsymbol{\theta}}\diag(\p)\A_{\boldsymbol{\theta}}^H
      + \sigma_n^2\I,
\end{equation}
where $\p$ is the spatial power spectrum over the grid, denoted simply as $\R(\p)$ hereafter. Then, the following global optimization problem is solved to jointly estimate all grid powers:
\begin{equation}\label{eq:cov_mat}
        \widehat{\p} = \underset{\p}{\mathrm{\argmin}} \; \mathcal{D}^2(\R(\p),
        \widehat{\R}),\;\;\; \p \in \mathbb{R}^D_+,
\end{equation}
for a chosen distance measure $\mathcal{D}$. \rev{Methods that solve \eqref{eq:cov_mat} are agnostic to the sensor arrangement: it enters only through the covariance model in \eqref{eq:covariance_model}, which can be defined for ULA/UCA/UPA or any other known array configuration.}

The linear relation between the covariance matrix model $\R(\p)$ and the grid power vector $\p$ led to the derivation of the AMV criterion \cite{ottersten1998covariance},\cite{abeida2012iterative}:
\begin{align}\label{eq:AMV}
    \mathcal{D}^2_{\mathrm{AMV}}(\R(\p), \widehat{\R}) 
    &= (\hat{\r} - \r)^H \C_\r^{-1} (\hat{\r} - \r) \nonumber \\
    &= \|\invsqrtm{\R}(\widehat{\R} - \R)\invsqrtm{\R}\|_F^2,
\end{align}
where $\C_\r \triangleq \R^H \otimes \R$, $\r \triangleq \Vec{(\R)}$, and $\hat{\r} \triangleq \Vec{(\widehat{\R})}$. This criterion depends non-linearly on the power vector $\p$, and it is not convex. Importantly, it led to the following criterion used in the well-known SPICE algorithm \cite{stoica2010spice}:
\begin{align}\label{eq:SPICE}
    \mathcal{D}^2_{\mathrm{SPICE}}(\R(\p),
        \widehat{\R}) = \|\invsqrtm{\R}(\widehat{\R} - \R)\invsqrtm{\widehat{\R}}\|_F^2.
\end{align}
This is a convex objective\cite{stoica2010spice}, which can be efficiently optimized using an iterative process. In \cite{abeida2012iterative}, another iterative algorithm, called SAMV, was suggested, directly minimizing the AMV criterion while maintaining a sparse solution. 

In the context of this paper, we emphasize that both SAMV and SPICE minimize a weighted \emph{Euclidean distance} between covariance matrices. However, they do not take into account the geometry of covariance matrices as points on the Riemannian manifold of HPD matrices.

\section{PROPOSED ALGORITHM}\label{sec:proposed_approach}
Following the on-grid covariance matching approach with a multi-source model \eqref{eq:cov_mat}, we propose to estimate the source power spectrum by minimizing the Jensen–Bregman LogDet divergence~\eqref{eq:JBLD} between the model and sample covariance matrices:

\begin{equation}
        \label{eq:SERCOM}
        \hat{\p} = \underset{\p \in \mathbb{R}^D_+}{\arg\min} \;\;
        \log\left|\frac{\widehat\R + \R(\p)}{2}\right| - \frac{1}{2}\log|\R(\p)|,
\end{equation}
where the constant term $-\tfrac{1}{2}\log|\widehat\R|$ is omitted since it does not depend on $\p$.
This covariance matching criterion considers the HPD Riemannian geometry of the covariance matrices, in contrast to the ``Euclidean'' AMV, and SPICE criteria \eqref{eq:AMV}-\eqref{eq:SPICE}. 
Moreover, unlike the AIRM distance \eqref{eq:AIRM_Dist}, the JBLD-based formulation in \eqref{eq:SERCOM} remains valid even when $N < M$, since $\widehat\R$ does not need to be a full rank matrix.

We will show that it leads to superior performance while maintaining similar or better computational demand.

The optimization of the proposed estimator \eqref{eq:SERCOM}, termed SERCOM, is outlined in Algorithm~\ref{alg:JBLD_power_estimation}.
For initialization, we set $\p^{(0)}$ using the classical delay-and-sum beamformer:
\begin{equation}
    \p^{(0)}_d = \a_{\theta_d}^H\widehat{\R}\a_{\theta_d}, \quad d=1,\ldots,D.
\end{equation}
The power vector $\p$ is then updated iteratively in a gradient-based manner, where the gradient at each iteration is given by
\begin{equation}
\nabla_{p_d} \mathcal{D}_{\mathrm{JBLD}}^2(\R(\p), \widehat{\R})  = \a_{\theta_d}^H\left(\nabla_\R \mathcal{D}_{\mathrm{JBLD}}^2(\R(\p), \widehat{\R}) \right)\a_{\theta_d}
\label{eq:gradient_p_JBLD}
\end{equation}
where
\begin{equation}
\nabla_{\R} \mathcal{D}_{\mathrm{JBLD}}^2(\R, \widehat{\R}) = (\R + \widehat{\R})^{-1} -\frac{1}{2}\R^{-1}
\label{eq:gradient_R_JBLD}
\end{equation}
The \rev{update of the} vector $\p$ is carried out using the \emph{Adam} optimizer \cite{kingma2014adam}, which combines momentum \cite{nesterov1986} and adaptive learning rates \cite{duchi2011adaptive}. The momentum stabilizes updates for convergence in direction of consistent descent, while the adaptive learning rates \rev{use} the squared gradients for each component of $\p$ to ensure that coordinates with large variability are dampened. This coordinate-wise normalization is well-suited for sparse spectrum estimation, where only a small subset of entries in $\p$ are expected to be nonzero, corresponding to active sources, and different coordinates require different scaling. After each Adam update, the power vector is projected onto the non-negative orthant by enforcing $p_d = \max(0, p_d)$, consistent with its interpretation as a power spectrum. Together, momentum, adaptive normalization, and non-negativity projection yield a stable optimization trajectory that favors accurate and sparse recovery of the sources. The iterations continue until the relative change in $\p$ between consecutive iterations falls below a predefined threshold, $\epsilon_p$, ensuring convergence to a stable estimate.

\rev{
While second-order strategies are natural candidates for log-determinant objectives, our preliminary experiments with MM-style and Newton-type updates were substantially slower per iteration and did not yield improved solutions. For this reason, we focus on the Adam-based implementation in the reported results.
}

\rev{
The objective in \eqref{eq:SERCOM} is generally non-convex in the vector $\p$. Consequently, SERCOM is not guaranteed to converge to the global optimum, and may in principle converge to suboptimal stationary points. In practice, we observed stable convergence with the Adam updates and low sensitivity to initialization across several reasonable choices, including delay-and-sum, MVDR, and a Log-Euclidean–based initialization \cite{picard2024riemannian}.
}
\begin{algorithm}[h!] 
\caption{SERCOM using JBLD}
\begin{algorithmic}[1] \label{alg:JBLD_power_estimation}
\STATE \textbf{Input:} $\widehat{\R}$, $\sigma_n^2$, $\boldsymbol{\theta}\in\mathbb{R}^D$,  $\eta$, $\beta_1$, $\beta_2$, $\epsilon_v$ \textit{maxiter}, $\epsilon_p$ 
\STATE $\p^{(0)} = [\a_{\theta_d}^H\widehat{\R}\a_{\theta_d}]_{d=1}^D$
\STATE $m^{(0)} = 0$, $v^{(0)} = 0$
\FOR{$i=1$ to \textit{maxiter}}
    \STATE construct $\R(\p^{(i-1)}; \boldsymbol{\theta})$\;\textit{// using ~\eqref{eq:covariance_model}}
    \STATE $\g^{(i)} = \nabla_{\p} \mathcal{D}_{\mathrm{JBLD}}^2(\R(\p^{(i-1)}; \boldsymbol{\theta}), \widehat{\R}) $\;\textit{// using ~\eqref{eq:gradient_p_JBLD},\eqref{eq:gradient_R_JBLD}}
    \STATE $\m^{(i)} = \beta_1 \m^{(i-1)} + (1-\beta_1) \g^{(i)}$
    \STATE $\v^{(i)} = \beta_2 \v^{(i-1)} + (1-\beta_2) (\g^{(i)})^2$
    \STATE $\hat{\g}^{(i)} = \left(\m^{(i)}/(1-\beta_1^i)\right) \oslash \left(\sqrt{\v^{(i)}/(1-\beta_2^i)} + \epsilon_v \right)$
    \STATE $\p^{(i)} = \p^{(i-1)} - \eta \hat{\g}^{(i)}$
    \STATE $\p^{(i)} = \max(\p^{(i)}, 0)$
    \IF{$\| \p^{(i)} - \p^{(i-1)} \| / \| \p^{(i-1)} \| < \epsilon_p $} 
        \STATE \textbf{break}
    \ENDIF
\ENDFOR
\STATE \textbf{Output:} $\p^{(t)}$
\end{algorithmic}
\end{algorithm}

\rev{
The hyperparameter settings for Algorithm~\ref{alg:JBLD_power_estimation} are detailed in Table~\ref{tab:hyperparameters}.
}
\section{THEORETICAL ANALYSIS OF THE MATCHING CRITERIA}

In this section, we theoretically examine the AMV, SPICE, AIRM, and JBLD as criteria for matching a model covariance\footnote{In this section, we write $\R(\p)$ simply as $\R$ for brevity.}
$\R(\p)$ to the sample covariance $\widehat\R$, obtained from $N$ snapshots, and establish the conditions under which the criteria coincide and under which they diverge.

We begin by demonstrating that these criteria share a common mathematical structure, even though they arise from different perspectives (Euclidean and Riemannian).
To this end, we define the HPD matrix\footnote{For this theoretical analysis section, we assume $N \ge M$ such that $\widehat\R$ is surely HPD}
\begin{equation} \label{eq:Q}
\Q \triangleq \R^{-1/2}\,\widehat\R\,\R^{-1/2} \;\in\; \mathcal{P}_M.
\end{equation}
Then, the AIRM distance $\|\log(\Q)\|_F^2$ \eqref{eq:AIRM_Dist} reduces, after linearizing the matrix logarithm near the identity matrix $\I$,
\[
\log(\Q) \approx \Q - \I = \R^{-1/2}(\widehat{\R}-\R)\R^{-1/2},
\]
to the AMV criterion \eqref{eq:AMV}.
Thus, AMV may be interpreted as a first-order approximation of AIRM around $\Q=\I$ (i.e., when $\R=\widehat\R$). From a manifold viewpoint, the AIRM distance measures the Riemannian norm \eqref{eq:rie_inner_prod} of the logarithmic map $\mathrm{Log}_{\R}(\widehat\R)$ \eqref{eq:logmap}, namely, the mapping of $\widehat{\R}$ to the tangent space at $\R$, which is often viewed as the Riemannian counterpart of Euclidean subtraction. The AMV replaces this Riemannian operation simply by the Euclidean subtraction $\widehat\R-\R$.

Given the known proximity of JBLD to AIRM \cite{cherian2012jensen,sra2012new}, and the close relationship between SPICE and AMV, it is natural to expect that when $\R$ and $\widehat\R$ are close, all four criteria exhibit similar behavior.

For formal analysis, we express the closeness of $\R$ and $\widehat\R$ in terms of two main factors. First, the \emph{model mismatch}, i.e., how well $\R$ represents the population covariance $\bbar\R\in\mathcal{P}_M$ that generates the data $\{ \y(t) \}_{t=1}^N \overset{\mathrm{i.i.d.}}{\sim} \mathcal{CN}(0,\bar\R)$. This model mismatch often depends on factors such as array calibration imperfections and correlated or off-grid sources. Second, the \emph{sampling error}, i.e., the deviation of the sample covariance $\widehat\R$ from the population covariance $\bbar\R$, due to the finite number of snapshots $N$.
Using these two factors, the proximity \rev{between} $\R$ and $\widehat\R$ is bounded by:
\begin{equation} \label{eq:model_plus_sampling}
\|\widehat{\R}\R^{-1}-\I\|_2 
\;\le\; \|\bbar\R\R^{-1}-\I\|_2
\;+\;\|(\widehat\R-\bbar\R)\R^{-1}\|_2 \;\triangleq\; \varepsilon,
\end{equation}
where $\varepsilon$ consists of two parts: the first term represents the model mismatch, 
and the second term represents the sampling error. Here $\|\cdot\|_2$ denotes the spectral norm, 
i.e., the largest singular value.

\subsection{Asymptotic Equivalence of Criteria}

Define the AIRM-ball of radius $\rho>0$ around $\bar\R$ by
\begin{equation}
\mathcal B_{\mathrm{AIRM}}(\bar\R,\rho) 
=\left\{\,\R\in\mathcal{P}_M \;\middle|\;
\mathcal{D}_{\mathrm{AIRM}}(\bar\R,\R)<\rho \right\},
\end{equation}
and the condition number $\kappa(\bar\R)=\|\bar\R\|_2\cdot\|\bar\R^{-1}\|_2$.

The following result shows that with sufficiently small model mismatch and a large enough number of samples, all four criteria become asymptotically close in a neighborhood of $\widehat\R$. In particular, their differences vanish at different orders of $\varepsilon$ defined in \eqref{eq:model_plus_sampling}, all relative to the SPICE criterion, which serves as a convenient baseline.

\begin{theorem}[Asymptotic equivalence of criteria]
\label{thm:asymptotic_closenseness}
Let $\widehat \R$ be the sample covariance from $N$ snapshots.
Given a target $\varepsilon\in(0,1)$, confidence $1-\delta$, and radius 
$\rho \in (0,\log(1+\varepsilon))$, if
\begin{equation}
N \;\ge\; N_0(\varepsilon,\delta,\rho)
:= \frac{\kappa(\bar\R)^2\,e^{2\rho}}
        {\big(\varepsilon-(e^\rho-1)\big)^2}
   \,\Big(M+\log(2/\delta)\Big),
\end{equation}
then with probability at least $1-\delta$, for every 
$\R\in\mathcal B_{\mathrm{AIRM}}(\bar\R,\rho)$ we have
\begin{equation}
\|\widehat\R\R^{-1}-\I\|_2 \le \varepsilon,
\end{equation}
and
\begin{equation}
\begin{aligned}
\big|\mathcal{D}_{\mathrm{AMV}}^2(\R,\widehat\R)
   -\mathcal{D}_{\mathrm{SPICE}}^2(\R,\widehat\R)\big|
   &\le M\cdot C_{\mathrm{AMV}}(\varepsilon)\,\varepsilon^3, \\
\big|\mathcal{D}_{\mathrm{AIRM}}^2(\R,\widehat\R)
   -\mathcal{D}_{\mathrm{SPICE}}^2(\R,\widehat\R)\big|
   &\le M\cdot C_{\mathrm{AIRM}}(\varepsilon)\,\varepsilon^4, \\
\big|8\mathcal{D}_{\mathrm{JBLD}}^2(\R,\widehat\R)
   -\mathcal{D}_{\mathrm{SPICE}}^2(\R,\widehat\R)\big|
   &\le M\cdot C_{\mathrm{JBLD}}(\varepsilon)\,\varepsilon^4,
\end{aligned}
\end{equation}
where $C_{\mathrm{AMV}},C_{\mathrm{AIRM}},C_{\mathrm{JBLD}}$ are functions 
of $\varepsilon$ that converge to finite constants as $\varepsilon\to 0$.

In particular, as $N\to\infty$ and $\varepsilon\to 0$, all criteria become equivalent in a neighborhood of the shared minimizer $\R=\widehat\R$.
\end{theorem}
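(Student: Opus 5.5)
The proof has two parts: a probabilistic part that gives $\|\widehat\R\R^{-1}-\I\|_2\le\varepsilon$ uniformly over the ball, and a deterministic spectral part that compares the four criteria through the eigenvalues of $\Q$ in \eqref{eq:Q}.

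\textbf{Probabilistic part.} Fix $\R\in\mathcal B_{\mathrm{AIRM}}(\bar\R,\rho)$. Let $\lambda_i$ be the eigenvalues of $\bar\R^{-1/2}\R\bar\R^{-1/2}$. The AIRM condition gives $|\log\lambda_i|<\rho$, so every $\lambda_i\in(e^{-\rho},e^{\rho})$.

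The model-mismatch term of \eqref{eq:model_plus_sampling} is then controlled as follows:
\[
\|\bar\R\R^{-1}-\I\|_2\le \kappa(\bar\R)^{1/2}\,\max_i|\lambda_i^{-1}-1|\le \kappa(\bar\R)^{1/2}(e^\rho-1).
\]
The mismatch in $\|\cdot\|_2$ only exceeds $e^\rho-1$ by a similarity factor. I would absorb this, or argue via the similarity $\bar\R\R^{-1}\sim \bar\R^{1/2}\R^{-1}\bar\R^{1/2}$, whose spectral radius is at most $e^\rho-1$ away from $1$. This is where the $e^\rho-1$ in $N_0$ comes from.

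For the sampling term, write
\[
\|(\widehat\R-\bar\R)\R^{-1}\|_2\le \|\widehat\R-\bar\R\|_2\,\|\R^{-1}\|_2,
\qquad
\|\R^{-1}\|_2\le e^\rho\|\bar\R^{-1}\|_2 .
\]
Then invoke a standard Gaussian covariance concentration bound. With probability $1-\delta$,
\[
\|\bar\R^{-1/2}\widehat\R\bar\R^{-1/2}-\I\|_2\lesssim\sqrt{(M+\log(2/\delta))/N}.
\]
This yields
\[
\|\widehat\R-\bar\R\|_2\le \|\bar\R\|_2\sqrt{(M+\log(2/\delta))/N}.
\]
Hence the sampling term is at most $\kappa(\bar\R)e^\rho\sqrt{(M+\log(2/\delta))/N}$. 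This is at most $\varepsilon-(e^\rho-1)$ exactly when $N\ge N_0$. Because the concentration event does not depend on $\R$, the bound holds uniformly over the ball, and summing the two terms gives $\|\widehat\R\R^{-1}-\I\|_2\le\varepsilon$.

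\textbf{Deterministic part.} The eigenvalues $q_i$ of $\Q$ coincide with those of $\widehat\R\R^{-1}$, so $|q_i-1|\le\varepsilon<1$. Write $x_i=q_i-1$. Each criterion then becomes a scalar spectral sum:
\begin{itemize}
\item AMV: $\mathrm{tr}(\Q-\I)^2=\sum x_i^2$.
\item SPICE: $\mathrm{tr}\big((\Q-\I)^2\Q^{-1}\big)=\sum x_i^2/(1+x_i)$, using $\invsqrtm{\R}(\widehat\R-\R)\invsqrtm{\widehat\R}$ and cyclicity of the trace.
\item AIRM: $\sum\log^2(1+x_i)$.
\item JBLD: the $\R^{1/2}$ factors cancel in the determinants, giving $\sum\big[\log\frac{2+x_i}{2}-\tfrac12\log(1+x_i)\big]$.
\end{itemize}

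Now Taylor-expand each term in $x$:
\begin{itemize}
\item SPICE: $x^2-x^3+x^4-\dots$
\item AMV: $x^2$, so AMV minus SPICE is $x^3/(1+x)$, which is $O(\varepsilon^3)$ with $C_{\mathrm{AMV}}=1/(1-\varepsilon)$.
\item AIRM: $x^2-x^3+\tfrac{11}{12}x^4-\dots$
\item $8\,$JBLD: $x^2-x^3+\tfrac{7}{8}x^4+\dots$, which I will verify by expansion.
\end{itemize}
The cubic terms cancel in both of the last two, so each difference is $x^4 g(x)$ with $g$ analytic on $|x|<1$. Setting $C(\varepsilon)=\sup_{|x|\le\varepsilon}|g(x)|$ gives a finite limit as $\varepsilon\to0$. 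Summing over the $M$ eigenvalues gives the factor $M$.

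\textbf{Main obstacle.} I expect the obstacle to be the mismatch step. The statement uses the non-symmetric quantity $\|\bar\R\R^{-1}-\I\|_2$ while the AIRM ball controls only its spectrum. I would first try to get the clean bound $e^\rho-1$ by working with the symmetrized $\R^{-1/2}\widehat\R\R^{-1/2}$ throughout, since only its spectrum enters the deterministic part. If that fails, I would accept an extra $\kappa$-factor and absorb it into $N_0$.
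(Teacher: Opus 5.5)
\paragraph{Summary.} Your proposal follows the paper's route, but the mismatch step you flag is a genuine gap, and only your third remedy closes it.

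\paragraph{Where you agree with the paper.} You use the same mismatch/sampling split and the same bound $\|\R^{-1}\|_2\le e^\rho\|\bar\R^{-1}\|_2$. You use the same Gaussian concentration step, which holds uniformly over the ball because the event does not involve $\R$. You then make the same per-eigenvalue Taylor comparison against SPICE. Your quartic coefficients $11/12$ and $7/8$ are correct. Taking $C_\bullet(\varepsilon)=\sup_{|x|\le\varepsilon}|g_\bullet(x)|$ is a cleaner substitute for the paper's explicit Lagrange remainders. It also avoids a slip in the paper: the JBLD remainder should be $x^4\big((1+\xi)^{-4}-2(2+\xi)^{-4}\big)$, which gives $7/8$ at $\xi=0$.

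\paragraph{The gap.} Neither of your first two remedies works.
With $\S=\bar\R^{-1/2}\R\bar\R^{-1/2}$ one has $\bar\R\R^{-1}-\I=\bar\R^{1/2}(\S^{-1}-\I)\bar\R^{-1/2}$. This matrix is non-normal, so a bound on its spectral radius says nothing about its spectral norm. Absorbing the factor $\kappa(\bar\R)^{1/2}$ does not help either: it changes both the admissible range of $\rho$ and the denominator of $N_0$, so it proves a different theorem.

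The paper's proof takes exactly the shortcut you hesitate over. It asserts $\|\bar\R\R^{-1}-\I\|_2=\|\S^{-1}-\I\|_2$, which is false because the similarity is not unitary.

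In fact the first display cannot be proved as stated. Take $M=2$, $\bar\R=\diag(1,\kappa)$ and $\S^{-1}=\I+\eta(\mathbf{e}_1\mathbf{e}_2^T+\mathbf{e}_2\mathbf{e}_1^T)$, with $\log^2(1+\eta)+\log^2(1-\eta)<\rho^2$. Then $\R$ lies in the ball, yet $\|\bar\R\R^{-1}-\I\|_2=\eta\sqrt{\kappa}$, which exceeds $\varepsilon$ for large $\kappa$. Since $\widehat\R\to\bar\R$ as $N\to\infty$, we get $\|\widehat\R\R^{-1}-\I\|_2>\varepsilon$ with probability tending to one.

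\paragraph{The repair.} Your third idea is the correct one and should be the main argument. Write
\[
\Q-\I=\big(\R^{-1/2}\bar\R\R^{-1/2}-\I\big)+\R^{-1/2}(\widehat\R-\bar\R)\R^{-1/2}.
\]
The first term is Hermitian, and its eigenvalues are those of $\S^{-1}$ shifted by $-1$, so its norm is below $e^\rho-1$. The second term has norm at most $\|\R^{-1}\|_2\|\widehat\R-\bar\R\|_2$, which is the sampling term you already bounded.

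Hence $\|\Q-\I\|_2\le\varepsilon$ uniformly over the ball under the stated $N_0$. That is all the deterministic part uses, so the three criteria bounds follow. Only the first display must be read with $\Q-\I$ in place of $\widehat\R\R^{-1}-\I$.

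\paragraph{A shared imprecision.} Like the paper, you drop the absolute constant in the concentration inequality and the $(M+\log(2/\delta))/N$ term. Since $N\ge N_0$ forces $t=\sqrt{(M+\log(2/\delta))/N}<1$, one has $t+t^2\le 2t$. So the stated $N_0$ is correct only up to an absolute multiplicative constant.
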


The proof is provided in Appendix~\ref{appendix:proof_theorem1}.
Theorem~\ref{thm:asymptotic_closenseness} states that in the absence of model mismatch, i.e., when $\R$ can approach $\bbar\R$ arbitrarily closely, the four criteria \emph{asymptotically converge to one another} locally around the shared minimizer $\R=\widehat\R$. AMV coincides with SPICE up to 
$\mathcal{O}(\varepsilon^3)$, while AIRM and the scaled JBLD coincide up 
to $\mathcal{O}(\varepsilon^4)$. Outside this benign regime, e.g., when we have fewer snapshots, low SNR, correlated or closely spaced sources, etc., the criteria can diverge; nevertheless, our simulations demonstrate that the geometry-aware AIRM and JBLD retain empirical accuracy in these challenging conditions.

\subsection{Criteria Robustness}

All examined criteria can be expressed as a sum of $M$ scalar functions of the eigenvalues of $\Q$ \eqref{eq:Q}, $\{ \lambda_m \}_{m=1}^M$: 
\begin{equation} \label{eq:sum_of_psi}
\mathcal D_\bullet^2(\R,\widehat\R)
\;=\;\sum_{m=1}^M \psi_\bullet(\lambda_m),
\end{equation}
where
\begin{equation}
\begin{aligned}
\label{eq:psi_functions}
\psi_{\mathrm{AMV}}(\lambda)&=(\lambda-1)^2, 
& \psi_{\mathrm{SPICE}}(\lambda)&=\Big(\sqrt{\lambda}-\tfrac{1}{\sqrt{\lambda}}\Big)^2,\\
\psi_{\mathrm{AIRM}}(\lambda)&=(\log \lambda)^2, 
& \psi_{\mathrm{JBLD}}(\lambda)&=\log\!\Big(\tfrac{1+\lambda}{2\sqrt{\lambda}}\Big).
\end{aligned}
\end{equation}
See derivation in Appendix~\ref{appendix:derivation_psi_functions}. Figure~\ref{fig:psi_functions} illustrates these functions $\psi_\bullet(\lambda)$.

\begin{figure}[b!]
    \centering
    \includegraphics[width=0.9\figW\columnwidth]{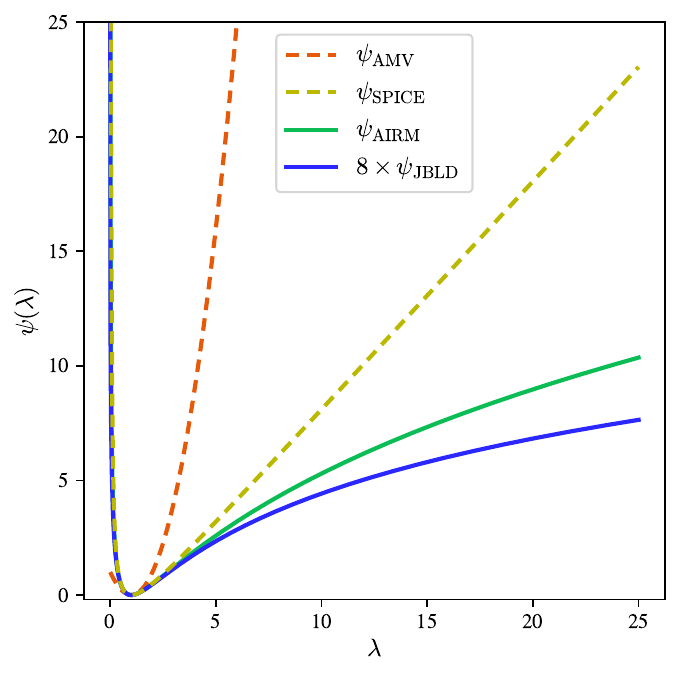}
    \caption{The $\psi(\lambda)$ function for the different covariance matching criteria. $\psi_{\mathrm{JBLD}}$ is scaled to ease visualization.}
    \label{fig:psi_functions}
\end{figure}

Expressing the different criteria in this unified form allows us to examine their robustness.
By the definition of $\Q$ \eqref{eq:Q}, when $\widehat\R$ and $\R$ are close, $\Q$ reduces to the identity matrix and its eigenvalues $\{ \lambda_m \}_{m=1}^M$ equal 1. However, in practice, abrupt noise, limited snapshots, or other distortions of $\widehat\R$ can give rise to ``outlier'' eigenvalues in $\Q$, deviating significantly from $1$. While AMV penalizes deviations from $\lambda=1$ quadratically (see Figure~\ref{fig:psi_functions}), SPICE grows roughly linearly for large $\lambda$. In contrast, AIRM and JBLD increase logarithmically, and are therefore expected to be less sensitive when an eigenvalue of $\Q$ becomes substantially larger than 1. The following theorem formalizes this observation.

\begin{theorem}[Criteria robustness to eigenvalue deviations]
\label{thm:criteria_robustness} 
Suppose that for some $\varepsilon \in (0,1)$ all but one of the eigenvalues of $\Q$ remain close to $1$, i.e. $|\lambda_m - 1| \leq \varepsilon$ for all $m \neq r$, while a single eigenvalue can be expressed as $\lambda_r = 1 + \Delta$, with $\Delta$ dominant in logarithmic scale:
\[
\log(1+\Delta)\ \ge\ 
\max\{\log(1+\varepsilon),\,-\log(1-\varepsilon)\}.
\]
Define the relative contribution of $\lambda_r$ under criterion $\bullet$ as
\[
s_r^{(\bullet)} = 
\frac{\psi_{\bullet}(\lambda_r)}{\sum_{m=1}^M \psi_{\bullet}(\lambda_m)},
\]
where the different $\psi_\bullet$ are defined in~\eqref{eq:psi_functions}. Then
\[
s_r^{(\mathrm{AMV})} \;\ge\; 
s_r^{(\mathrm{SPICE})} \;\ge\; 
s_r^{(\mathrm{AIRM})} \;\ge\; 
s_r^{(\mathrm{JBLD})}.
\]
\end{theorem}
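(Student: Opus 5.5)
The plan is to reduce each comparison $s_r^{(a)}\ge s_r^{(b)}$ between consecutive criteria to a pointwise inequality on ratios of the functions $\psi_\bullet$ in \eqref{eq:psi_functions}. Write $S_a=\sum_{m\neq r}\psi_a(\lambda_m)$. Then $s_r^{(a)}=\psi_a(\lambda_r)/(\psi_a(\lambda_r)+S_a)$. The hypothesis $\log(1+\Delta)\ge\log(1+\varepsilon)>0$ forces $\Delta>0$, so $\psi_\bullet(\lambda_r)>0$ for every criterion. Hence $s_r^{(a)}\ge s_r^{(b)}$ is equivalent to
\[
S_a\,\psi_b(\lambda_r)\le S_b\,\psi_a(\lambda_r).
\]
A sufficient condition is the termwise bound
\[
\psi_a(\lambda_m)\le \frac{\psi_a(\lambda_r)}{\psi_b(\lambda_r)}\,\psi_b(\lambda_m)\quad\text{for every } m\neq r.
\]
Summing this over $m\neq r$ gives the inequality above. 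The termwise bound holds trivially when $\lambda_m=1$, since both sides vanish. So it suffices to show that $h_{ab}=\psi_a/\psi_b$ satisfies $h_{ab}(\lambda_m)\le h_{ab}(\lambda_r)$ whenever $\lambda_m\neq1$.

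To do this I would pass to the logarithmic variable $t=\log\lambda$, with $t_r=\log(1+\Delta)>0$. The assumptions say exactly that every $m\neq r$ has $|t_m|\le\max\{\log(1+\varepsilon),-\log(1-\varepsilon)\}\le t_r$. In this variable the four functions become
\[
\psi_{\mathrm{AMV}}=(e^t-1)^2,\quad \psi_{\mathrm{SPICE}}=4\sinh^2(t/2),\quad \psi_{\mathrm{AIRM}}=t^2,\quad \psi_{\mathrm{JBLD}}=\log\cosh(t/2).
\]
The three ratios are then handled as follows.

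\textbf{AMV vs.\ SPICE.} Here $h(t)=(e^t-1)^2/(e^{t/2}-e^{-t/2})^2=e^t$, which is increasing. Since $t_m\le |t_m|\le t_r$, we get $h(t_m)\le h(t_r)$. This is the only non-even ratio, and it only needs the one-sided bound $t_m\le t_r$.

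\textbf{SPICE vs.\ AIRM.} Here $h(t)=\big(\sinh(t/2)/(t/2)\big)^2$, which is even and increasing in $|t|$ because $\sinh u/u$ is increasing on $u>0$. So $|t_m|\le t_r$ gives the claim.

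\textbf{AIRM vs.\ JBLD.} Set $u=t/2$. Then $h=4u^2/\log\cosh u$, which is even, and I must show that $g(u)=\log\cosh u/u^2$ is nonincreasing on $u>0$. This is the step I expect to need the most care. The sign of $g'$ is the sign of $u\tanh u-2\log\cosh u$. Let $f(u)=2\log\cosh u-u\tanh u$, so $f(0)=0$. Its derivative is
\[
f'(u)=\tanh u-u\,\mathrm{sech}^2u=\mathrm{sech}^2u\,(\sinh u\cosh u-u)\ge0,
\]
using $\sinh u\cosh u=\tfrac12\sinh 2u\ge u$. Hence $f\ge0$, $g$ is nonincreasing, and $h$ is increasing in $|t|$.

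Chaining the three comparisons yields $s_r^{(\mathrm{AMV})}\ge s_r^{(\mathrm{SPICE})}\ge s_r^{(\mathrm{AIRM})}\ge s_r^{(\mathrm{JBLD})}$. The only subtleties to check are two. First, the case $\lambda_m=1$, which the multiplied-out form of the termwise bound handles. Second, the role of the asymmetric threshold $\max\{\log(1+\varepsilon),-\log(1-\varepsilon)\}$: it is exactly what guarantees $|t_m|\le t_r$ for the two even ratios.
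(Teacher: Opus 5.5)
Your proof is correct and follows the paper's argument essentially step for step. Both proofs reduce each comparison to a termwise bound on the ratio $\psi_a/\psi_b$ over $m\neq r$, worked out in logarithmic eigenvalue coordinates ($t=\log\lambda$ in yours, $u=\tfrac12\log\lambda$ in the paper's), with the AMV/SPICE ratio equal to $\lambda$ and the monotonicity of $(\sinh u/u)^2$ and $(\log\cosh u)/u^2$ in $|u|$ as the two key lemmas. The only differences are minor: you prove the second monotonicity via $\sinh 2u\ge 2u$ rather than the paper's second-derivative sign argument, and your cross-multiplied form handles the degenerate cases $\lambda_m=1$ and $S^{(\bullet)}=0$ directly, without the paper's conventions $\alpha(0)=1$ and $\beta(0)=1/8$.
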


The proof is provided in Appendix~\ref{appendix:proof_theorem2}.
Theorem \ref{thm:criteria_robustness} states that when a single eigenvalue deviates strongly from $1$, AMV assigns 
the highest relative weight to the outlier, followed by SPICE, AIRM, and 
JBLD in decreasing sensitivity.
In spatial power estimation and DOA, large eigenvalue deviations may arise from spurious interference or multi-path propagation. Thus, Theorem~\ref{thm:criteria_robustness} implies that in such cases the AIRM- and JBLD-based methods may help mitigate such effects, yielding more robust covariance matching compared to the AMV and SPICE criteria.

\section{EXPERIMENTAL RESULTS}
\rev{
We demonstrate the performance of Algorithm~\ref{alg:JBLD_power_estimation}, referred to as SERCOM(JBLD), in simulations. We compare it with methods that estimate both source directions and powers through multi-source spatial power spectrum estimation, namely SPICE \cite{stoica2010spice,stoica2012spice,stoica2014weighted} and SAMV \cite{abeida2012iterative}. Additionally, we include covariance-matching baselines based on the AIRM and LE metrics, denoted SERCOM(AIRM) and SERCOM(LE), respectively, implemented analogously to Algorithm~\ref{alg:JBLD_power_estimation} but using the gradient of the cost functions in \eqref{eq:AIRM_Dist} and \eqref{eq:LE_Dist}. Table~\ref{tab:hyperparameters} lists the hyperparameters used for all SERCOM variants.
}

\begin{table}[h!]
    \centering
    \caption{Optimization hyperparameters for SERCOM}
    \label{tab:hyperparameters}
    \begin{tabular}{ccl}
        \toprule
        \textbf{Name} & \textbf{Value} & \textbf{Description} \\
        $\eta$ & $10^{-2}$ & Step size. \\
        $\beta_1$ & 0.9 & Exponential decay rate for the first moment. \\
        $\beta_2$ & 0.999 & Exponential decay rate for the second moment. \\
        $\epsilon_v$ & $10^{-8}$ & Constant added for numerical stability. \\
        \textit{maxiter} & 5000 & Maximum number of iterations. \\
        $\epsilon_p$ & $10^{-4}$ & Tolerance for power spectrum relative change. \\
        \bottomrule
    \end{tabular}
\end{table}

\rev{
For SERCOM(LE), we found it necessary to add a small ``ramp'' diagonal loading to the modeled covariance $\R$ to stabilize the evaluation of the matrix-logarithm gradient with respect to $\p$ (mitigating numerical issues when $\R$ has nearly repeated eigenvalues). This extra stabilization step is a practical drawback of optimizing the LE objective compared to SERCOM(JBLD).
}
\rev{
Unless stated otherwise, the angular grid is $0^\circ$--$180^\circ$ with resolution $0.5^\circ$ (thus $D=361$). 
}

\rev{
The signal-to-noise ratio (SNR) is defined as:
\begin{equation}
    \text{SNR} = 10 \log_{10} \left( \frac{\max_{k} \sigma_k^2}{\sigma_n^2} \right) \; \text{dB},
\end{equation}
where $\sigma_n^2$ denotes the noise power.
}
\rev{
Given an estimated spatial spectrum $\hat{\p}$, we extract DOA estimates by selecting the $K$ most prominent peaks. We report Monte-Carlo results with $L=500$ independent trials. We evaluate
\begin{equation}
\mathrm{RMSE}_{\mathrm{Power}} = \sqrt{\tfrac{1}{LK} \sum_{l=1}^{L} \sum_{k=1}^K \left(\sigma_k^2 - \widehat{\sigma}_k^2\right)^2},
\end{equation}
where $\sigma_k^2$ are the true source powers, and $\widehat{\sigma}k^2$ denotes the power associated with the $k$-th selected peak of the estimated spatial spectrum. The RMSE of the DOA estimates is
\begin{equation}
\mathrm{RMSE}_{\mathrm{DOA}} = \sqrt{\tfrac{1}{LK} \sum_{l=1}^{L} \sum_{k=1}^K \left(\phi_k - \hat{\phi}_k\right)^2},
\end{equation}
where $\phi_k$ are the true DOAs, and $\hat{\phi}_k$ denotes the grid direction corresponding to the $k$-th selected peak. 
}
\rev{
In \cref{fig:rmse_snr_ula,fig:rmse_n_ula,fig:rmse_dtheta,fig:rmse_rho,fig:rmse_snr_uca}, the shaded band indicates the interquartile range (25th–75th percentiles) of the per-trial squared error after applying the square root. For reference, the figures reporting RMSE in direction estimation also depict the stochastic Cramér–Rao bound (CRB) of angle estimation under the corresponding DOA scenarios \cite{stoica2002music}.
}

\rev{
\textbf{\textit{RMSE versus SNR.}}
We consider a ULA with half-wavelength spacing and $M=12$ sensors. In the first experiment, we evaluate the estimation performance for various SNR levels, ranging from $-4.5$ to $4.5$~dB. Three uncorrelated sources are located at on-grid directions $\theta_1 = \ang{35}$, $\theta_2 = \ang{43}$, and $\theta_3 = \ang{51}$. The power of the first two sources is $0$~dB, and the power of the third source is $-5$~dB. The number of snapshots is $N = 50$.
}
\rev{
\begin{figure}[htbp]
    \centering
    \includegraphics[width=\figW\columnwidth]{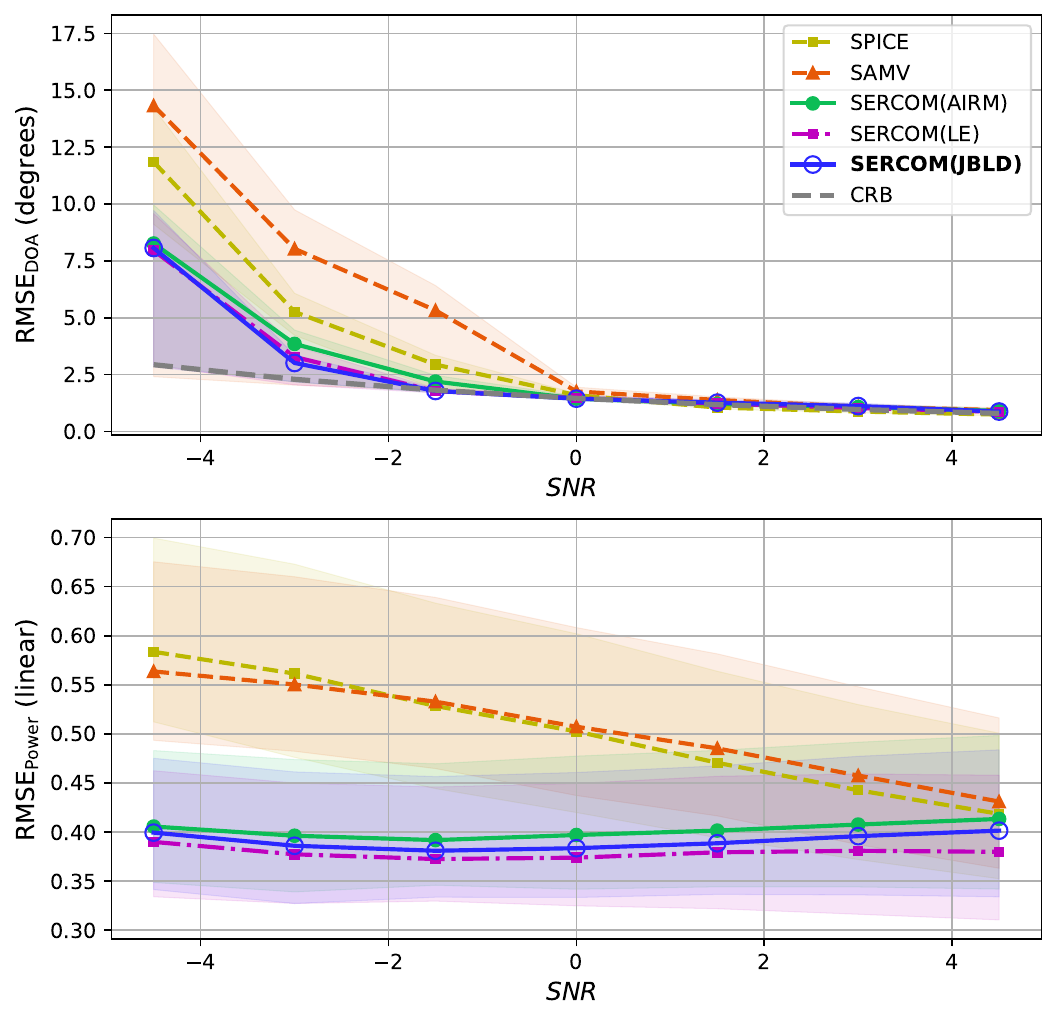}
    \caption{Algorithms' DOA (top) and power (bottom) estimation error at different SNR values.}
    \label{fig:rmse_snr_ula}
\end{figure}
}
\rev{
Figure~\ref{fig:rmse_snr_ula} reports the RMSE versus SNR. The top panel shows DOA estimation errors as a function of SNR. All SERCOM variants exhibit clear improvements over the Euclidean baselines. In particular, SERCOM(JBLD) achieves low errors at low SNR, approaching the CRB earlier than competing methods. The SERCOM(AIRM) and SERCOM(LE) algorithms perform similarly, providing a noticeable gain over SPICE and SAMV in the low-SNR regime.
}
\rev{
The bottom panel shows the corresponding RMSE in power estimation. The Riemannian approaches again outperform the Euclidean ones across all SNR values. All SERCOM variants yield consistently accurate power estimates, while SPICE and SAMV yield noticeably larger power-estimation errors. We attribute this to their broader lobes in the estimated spectrum, which spread the source power across neighboring angles. An illustrative example is provided in Figure~\ref{fig:power_spectrum_example}, showing the estimated spatial power spectrum, $\hat\p$, at $\text{SNR}=-1.5$ dB. These results highlight the robustness of the proposed Riemannian matching approach for power recovery even under low-SNR conditions.
}
\begin{figure}[ht]
    \centering
    \includegraphics[width=\figW\columnwidth]{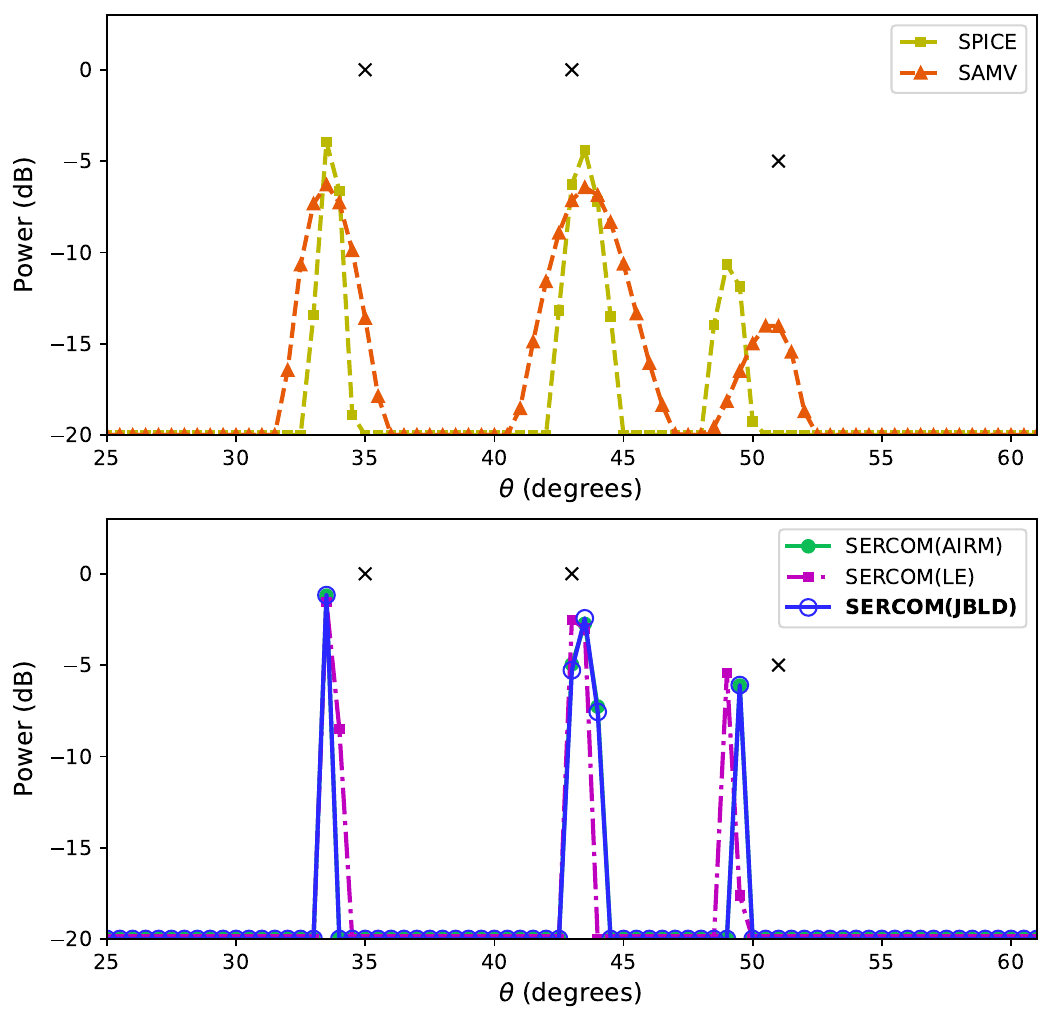}
    \caption{Spatial power spectrum estimated by the different algorithms. The DOA setting is as in the first experiment. The 'x's are located at the sources’ true direction and power.}
    \label{fig:power_spectrum_example}
\end{figure}

\rev{
\textbf{\textit{RMSE versus number of snapshots.}}
Next, we study the effect of the number of snapshots, $N$, on the performance. We fix the ULA configuration and sources of the previous experiment, with $\text{SNR}=-1.5$\,dB, and varied $N$ relative to the number of sensors $M=12$, from $N=M$ up to $N=7M$.
}
\begin{figure}[htbp]
    \centering
    \includegraphics[width=\figW\columnwidth]{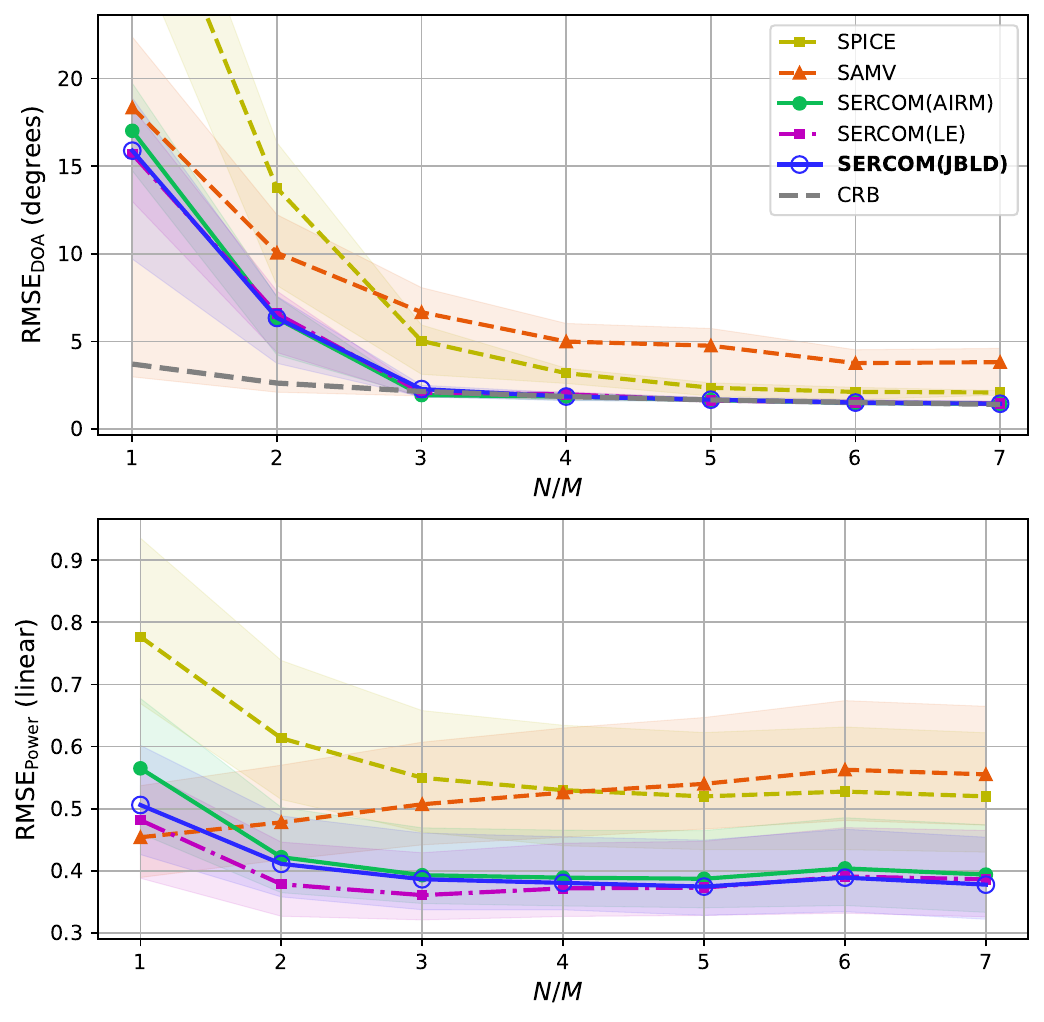}
    \caption{Algorithms' DOA (top) and power (bottom) estimation error while varying the number of snapshots, $N$ (normalized by number of sensors, $M=12$).}
    \label{fig:rmse_n_ula}
\end{figure}

\rev{
Figure~\ref{fig:rmse_n_ula} (top) depicts the RMSE of direction estimation as a function of the normalized number of snapshots, $N/M$. We observe that SERCOM variants achieve lower DOA RMSE than the Euclidean baselines for all values of $N/M$. SPICE performance is not shown at $N=M$ due to its very high error ($\ang{34}$), whereas the Riemannian approaches remain stable even in this sample-limited regime. SAMV appears less sensitive to the low-snapshot regime, but unlike SPICE it does not approach the CRB as rapidly when $N$ increases. The bottom panel shows the corresponding power RMSE, with Riemannian approaches again demonstrating accuracy and robustness to the number of snapshots. At low snapshot ratios $N/M$, SAMV often fails to activate the $-5$ dB component, with the third peak in its spectrum corresponding to spurious noise, and the estimated power concentrates on the two stronger sources, producing an artificially low power error. As $N/M$ increases, the weak source peak emerges and the power spreads across three lobes, causing the power error to increase.
}

\rev{
These results emphasize the challenge in the few-sample regime, where the sample covariance is a poor approximation of the true covariance. In this setting, we are outside the asymptotic regime of Theorem~\ref{thm:asymptotic_closenseness},
and, as expected, the criteria diverge. Empirically, the Euclidean-based methods suffer from degradation, while the Riemannian formulations remain robust and achieve significant performance gains in both direction and power estimation.
}

\rev{
\textbf{\textit{Off-grid sensitivity.}}
We next quantify sensitivity to grid mismatch. In this experiment only, we use a coarser grid with $1^\circ$ spacing to emphasize off-grid effects. We keep $M=12$ sensors and $N=50$ snapshots, and place two sources at directions $\theta_1=35^\circ+\Delta\theta$ and $\theta_2=51^\circ+\Delta\theta$. Both sources have power $0$\,dB and we set $\text{SNR}=0$\,dB. We sweep $\Delta\theta$ from $0^\circ$ to $0.5^\circ$. We also include ESPRIT as a representative gridless DOA baseline. We report DOA RMSE, focusing on direction bias induced by discretization mismatch.
}
\begin{figure}[htbp]
    \centering
    \includegraphics[width=\figW\columnwidth]{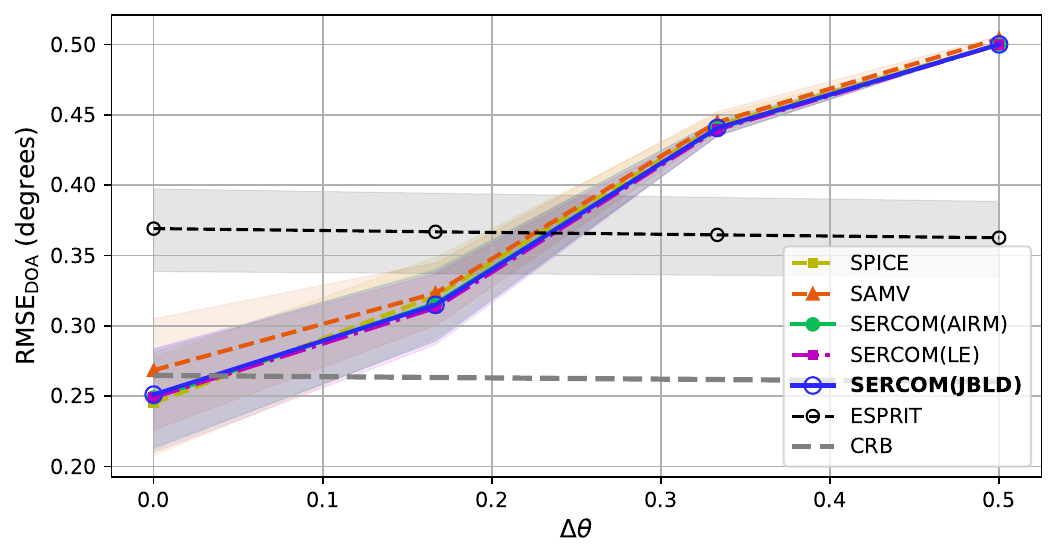}
    \caption{Algorithms' DOA estimation error for values of $\Delta\theta$ (ULA, 1$^\circ$ grid), including ESPRIT.}
    \label{fig:rmse_dtheta}
\end{figure}

\rev{
In Figure~\ref{fig:rmse_dtheta}, the DOA RMSE of all on-grid spectrum estimators (SPICE, SAMV, and SERCOM variants) increases as the sources move off the grid, reflecting discretization mismatch. At $\Delta\theta=0$ they achieve the lowest error\footnote{The CRB is a lower bound for unbiased estimators. The grid-based methods' sub-CRB performance at $\Delta\theta=0$ is a result of their inherent bias toward the grid points.}, but the error grows steadily with $\Delta\theta$, and for sufficiently large mismatch their RMSE exceeds that of ESPRIT. Of course, ESPRIT is not affected by the change in $\Delta\theta$ and its RMSE remains essentially constant.
}

\rev{
\textbf{\textit{Correlated sources.}}
The following experiment evaluates robustness to source correlation. We remain with $M=12$ sensors, $N=50$ snapshots, and $\text{SNR}=0$\,dB. Two sources are placed at $\theta_1 = \ang{35}$ and $\theta_2 = \ang{41}$, with equal powers at $0$\,dB. Unlike the previous experiments, the sources are now correlated. We characterize the correlation using a coefficient $\rho \in \mathbb{R}$, varying it from $\rho=0$ (uncorrelated sources) to $\rho=1$ (fully coherent sources).
The source covariance matrix is defined by
\begin{equation}
    \mathbf{\Sigma} = 
    \begin{bmatrix}
        \sigma_1^2 & \rho \sigma_1 \sigma_2 \\
        \rho \sigma_1 \sigma_2 & \sigma_2^2
    \end{bmatrix}.
\end{equation}
We include ESPRIT here as well to illustrate the effect of correlation on a subspace-based gridless baseline.
}
\begin{figure}[htbp]
    \centering
    \includegraphics[width=\figW\columnwidth]{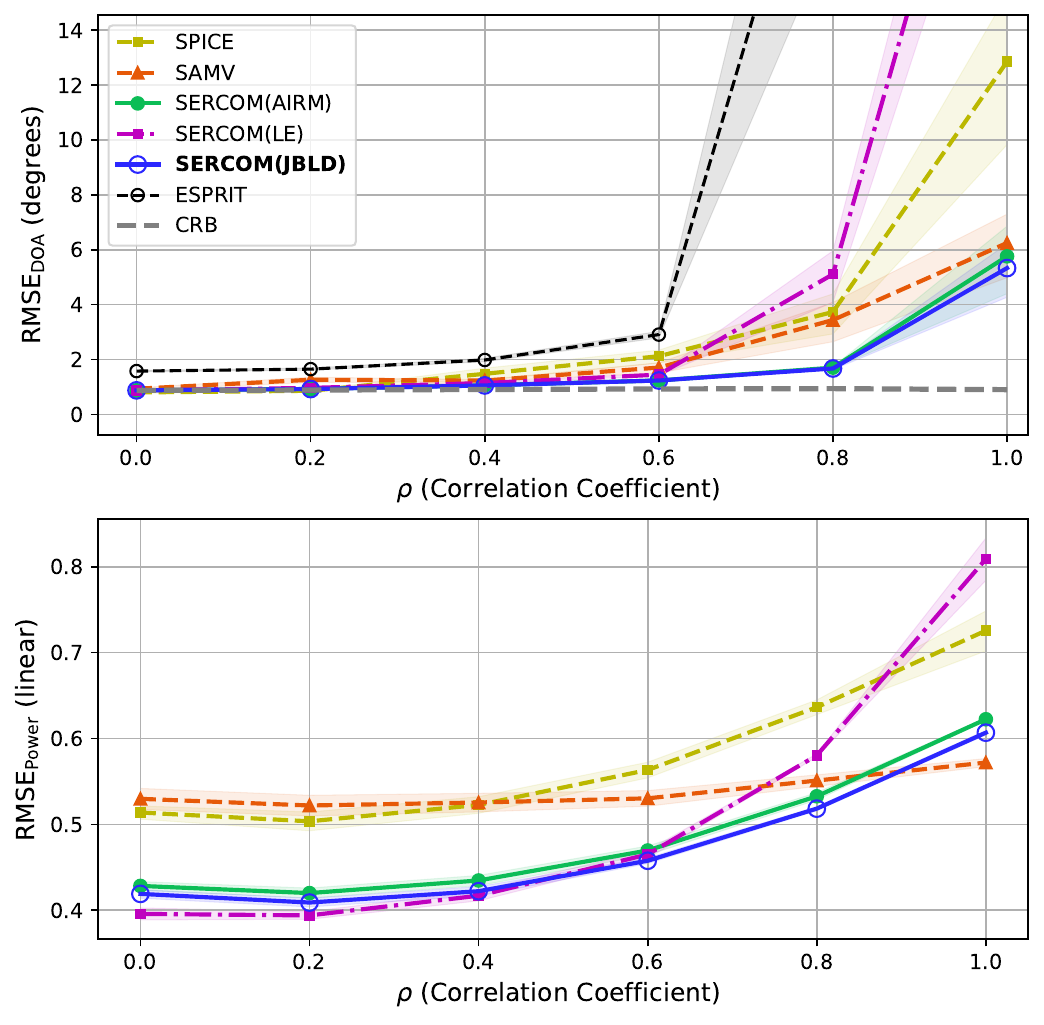}
    \caption{Algorithms' DOA (top) and power (bottom) estimation error as a function of the correlation coefficient, $\rho$.}
    \label{fig:rmse_rho}
\end{figure}

\rev{
Figure~\ref{fig:rmse_rho} (top) presents the DOA RMSE as a function of the correlation coefficient $\rho$. As $\rho$ increases, the estimation task becomes more challenging and the DOA RMSE rises for all methods. In particular, ESPRIT degrades sharply for highly correlated sources, reflecting the well-known sensitivity of subspace methods in this regime. SPICE also exhibits sensitivity to coherent sources, consistent with earlier reports in \cite{abeida2012iterative}, and the large errors of SERCOM(LE) are attributable to numerical instability in its optimization. Among all examined approaches, SERCOM(JBLD) remains the most stable and achieves the lowest DOA RMSE at high correlation. The bottom panel shows the corresponding power RMSE, where SERCOM(AIRM) and SERCOM(JBLD) remain stable across most values of $\rho$; at $\rho=1$, SAMV attains a slightly lower power RMSE than the SERCOM variants.
}

\rev{
These results highlight the challenge in the highly correlated regime. Source correlation introduces model mismatch for all algorithms, such that $\|\bbar{\R}\R^{-1}-\I\|_2$ may become substantially larger than zero, and the criteria diverge. Empirically, the performance of Euclidean-based criteria degrades rapidly in this setting, which reflects the fact that they neglect the underlying geometry. In contrast, the Riemannian-aware distances preserve robustness, allowing SERCOM(JBLD) and SERCOM(AIRM) to sustain reliable direction and power estimation even when sources are highly correlated.
}

\rev{
\textbf{\textit{UCA configuration.}}
Here, we demonstrate that the covariance-matching comparison is not restricted to a ULA. We repeat the SNR sweep using a semi-circular UCA configuration with $M=14$ sensors uniformly distributed along a semicircle, with the radius configured to maintain a half-wavelength inter-element arc length. We simulate three sources at directions $\theta_1 = \ang{35}$, $\theta_2 = \ang{43}$, and $\theta_3 = \ang{51}$. All sources lie in the array plane; hence, only azimuthal DOAs are estimated. The power of the first two sources is $0$~dB, and the power of the third source is $-5$~dB. The number of snapshots is $N = 50$.
}

\begin{figure}[htbp]
    \centering
    \includegraphics[width=\figW\columnwidth]{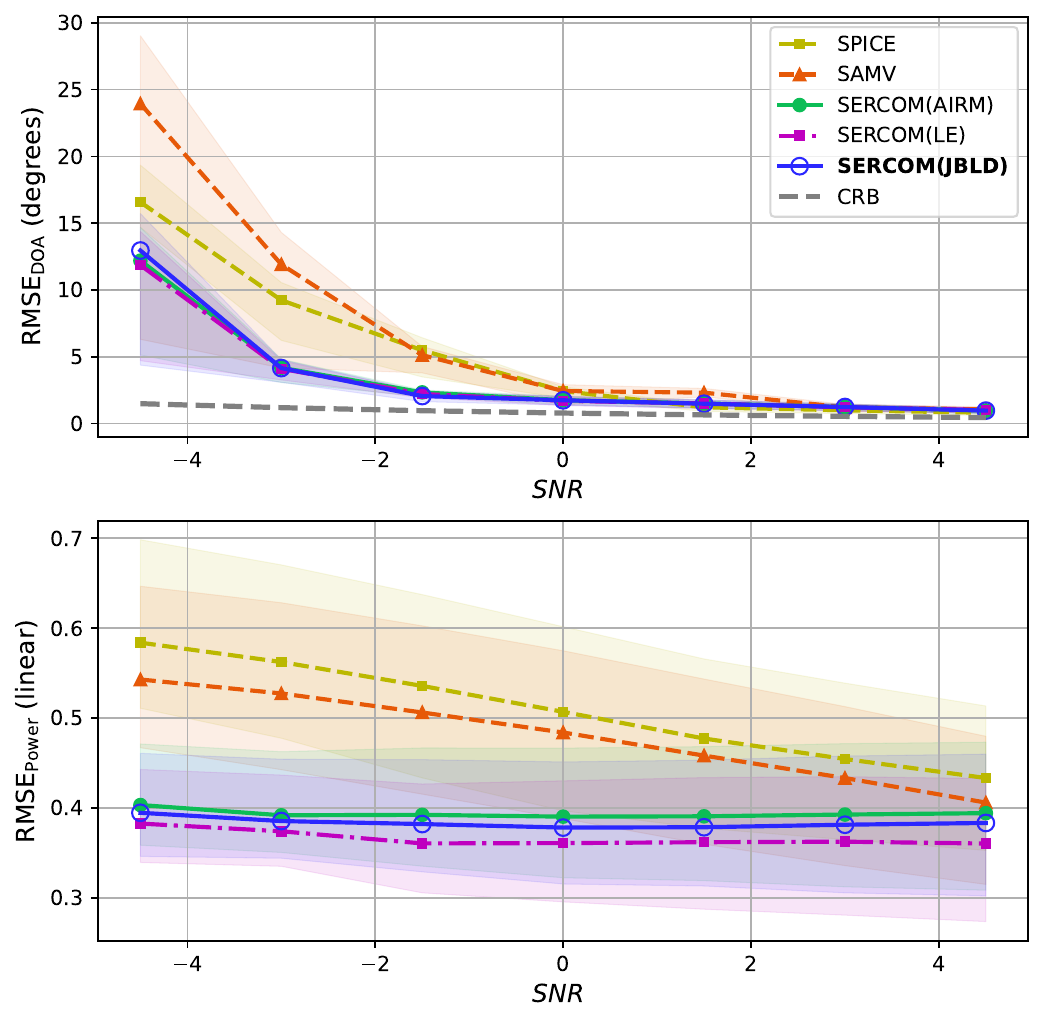}
    \caption{Algorithms' DOA (top) and power (bottom) estimation error at different SNR values, on a UCA configuration.}
    \label{fig:rmse_snr_uca}
\end{figure}
\rev{
Figure~\ref{fig:rmse_snr_uca} reports RMSE versus SNR for the half-circle UCA. The overall trends are consistent with the ULA results, demonstrating that the compared covariance-matching methods apply beyond the ULA configuration.
}

\rev{
\textbf{\textit{Runtime.}} 
Finally, Fig.~\ref{fig:runtime} shows a box plot of the runtime (in seconds) for all algorithms with $M=12$ and $M=120$ sensors, based on $L=500$ Monte Carlo runs. The results illustrate how the computational cost scales with $M$. SAMV and SPICE exhibit higher runtimes, since they require a matrix inversion or matrix square root in each iteration. The AIRM- and LE-based variants become prohibitively slow for larger arrays because they require an eigen-decomposition in each iteration. In contrast, our proposed method, SERCOM(JBLD), avoids this overhead and maintains favorable scaling with $M$.
}
\begin{figure}[htbp]
    \centering
    \includegraphics[width=\figW\columnwidth]{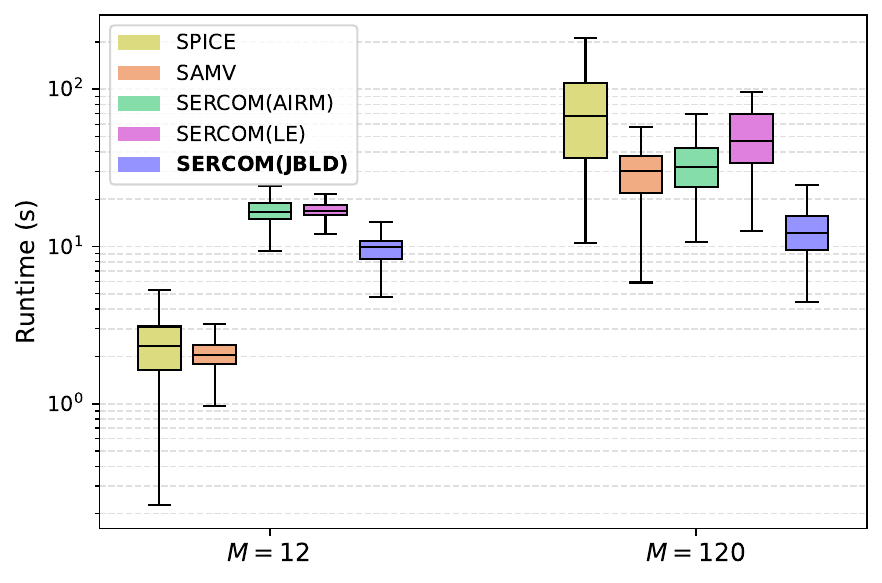}
    \caption{Algorithms' runtime (in seconds) for an array with $M$ sensors.}
    \label{fig:runtime}
\end{figure}
\rev{
\begin{table}[htbp]
\centering
\caption{Average number of iterations.}
\label{tab:num_iters}
\begin{tabular}{lcc}
\hline
\textbf{Algorithm} & \textbf{$M=12$} & \textbf{$M=120$} \\
\hline
SPICE & 3447.71 & 2928.82 \\
SAMV  & 3630.79 & 2635.85 \\
SERCOM(AIRM)  & 4546.20 & 2109.64 \\
SERCOM(LE)    & 4631.62 & 3464.05 \\
SERCOM(JBLD)  & 4389.59 & 1866.46 \\
\hline
\end{tabular}
\end{table}
}

\section{CONCLUSION} \label{sec:conclusion}
We introduced SERCOM, a novel spatial power estimation method based on Riemannian covariance matching, applied to DOA estimation. We established a connection between widely used Euclidean criteria and HPD geometry-aware distances, which motivated the use of the computationally efficient JBLD divergence as a criterion for covariance matching on the manifold. \rev{The proposed method achieves improved direction and power estimation relative to SPICE and SAMV, particularly at low SNR, with limited snapshots, and for correlated sources. While the RMSE performance is comparable across the different Riemannian metrics tested, our JBLD-based approach demonstrates a slight advantage for correlated sources and a significant reduction in runtime}. \rev{These results suggest that SERCOM should be considered a robust candidate for evaluation in practical, real-world settings where signal conditions are non-ideal.}

Future work will explore extending the proposed framework to \rev{gridless} DOA estimation, aiming to combine the robustness of Riemannian metrics with the resolution benefits of continuous-domain formulations.

\clearpage \clearpage
\bibliographystyle{IEEEtran}
\bibliography{references} 

\clearpage
\appendices

\onecolumn
\section{Derivation of $\psi$ functions} \label{appendix:derivation_psi_functions}
\noindent\textit{The AIRM criterion.} \\
\indent We would like to show that $\psi_{\mathrm{AIRM}}(\lambda)=(\log \lambda)^2$. From $\mathcal{D}_{\mathrm{AIRM}}^2(\R,\widehat{\R}) = \|\log(\Q)\|_F^2$, this is immediate.

\bigskip\noindent\textit{The AMV criterion.} \\
\indent We would like to show that $\psi_{\mathrm{AMV}}(\lambda)=(\lambda-1)^2$. From $\mathcal{D}_{\mathrm{AMV}}^2(\R,\widehat{\R}) = \|\Q - \I\|_F^2$, this is immediate.

\bigskip\noindent\textit{The SPICE criterion.} \\
\indent We would like to show that $\psi_{\mathrm{SPICE}}(\lambda)=\Big(\sqrt{\lambda}-\tfrac{1}{\sqrt{\lambda}}\Big)^2$.
Notice that
\begin{equation}
\begin{aligned}
\label{eq:spice-trace-form}
&\mathcal{D}_{\mathrm{SPICE}}^2(\R,\widehat{\R}) &\;=\;& \|\invsqrtm{\R}(\R - \widehat\R)\invsqrtm{\widehat\R}\|_F^2 \\
&\quad &\;=\;& \mathrm{tr}\left(\R^{-1}(\R-\widehat\R)\widehat\R^{-1}(\R-\widehat\R)\right)\\
&\quad &\;=\;& \mathrm{tr}\left((\I-\R^{-1}\widehat\R)(\widehat\R^{-1}\R-\I)\right)\\
&\quad &\;=\;& \mathrm{tr}\!\big(\widehat\R^{-1}\R\big) + \mathrm{tr}\!\big(\R^{-1}\widehat\R\big) - 2M .
\end{aligned}
\end{equation}
Since $\mathrm{tr}(\R^{-1}\widehat\R)=\mathrm{tr}(\Q)$ and
$\mathrm{tr}(\widehat\R^{-1}\R)=\mathrm{tr}(\Q^{-1})$, we can write \eqref{eq:spice-trace-form} by
\begin{equation}
\mathcal{D}_{\mathrm{SPICE}}^2
= \mathrm{tr}(\Q)+\mathrm{tr}(\Q^{-1})-2M
= \sum_{m=1}^M \big(\lambda_m + \lambda_m^{-1} - 2\big).
\end{equation}
Finally, for each $\lambda>0$,
\[
\lambda+\lambda^{-1}-2
= \left(\sqrt{\lambda}-\frac{1}{\sqrt{\lambda}}\right)^2 ,
\]
hence the per-eigenvalue contribution is
\[
\psi_{\mathrm{SPICE}}(\lambda)
= \left(\sqrt{\lambda}-\frac{1}{\sqrt{\lambda}}\right)^2 .
\]

\bigskip\noindent\textit{The JBLD criterion.} \\
\indent We would like to show that $\psi_{\mathrm{JBLD}}(\lambda)=\log\!\Big(\tfrac{1+\lambda}{2\sqrt{\lambda}}\Big)$. Observe that 
\begin{equation}
\begin{aligned}
 &\mathcal{D}^2_{\mathrm{JBLD}}(\R,\widehat\R) &\;=\;& \log\left|\frac{\R + \widehat\R}{2}\right| - \frac{1}{2}\log|\R \widehat\R| \\
 &\quad &\;=\;& \log\left|\R + \widehat\R\right| - \frac{1}{2}\log|\R \widehat\R| - M\log(2)\\
 &\quad &\;=\;&  \log\left|\I + \widehat\R\R^{-1}\right| + \log|\R| - \frac{1}{2}\log|\widehat\R\R^{-1}| -\frac{1}{2}\log|\R^2| - M\log(2)\\
 &\quad &\;=\;& \log\left|\I + \widehat\R\R^{-1}\right| - \frac{1}{2}\log|\widehat\R\R^{-1}| - M\log(2).\\
\end{aligned}
\end{equation}
Now, using the fact that determinant is invariant under matrix similarity, and  $\widehat\R\R^{-1}$ is similar to $\Q$, we can write:
\begin{equation}
\begin{aligned}
&\mathcal{D}^2_{\mathrm{JBLD}}(\R,\widehat\R) &\;=\;& \sum_{m=1}^M{\left(\log(1+\lambda_m) - \tfrac12\log(\lambda_m) - \log(2) \right)}\\
&\quad &\;=\;& \sum_{m=1}^M{\log\left(\frac{1+\lambda_m}{2\sqrt{\lambda_m}}\right)},\\
\end{aligned}
\end{equation}
which concludes the derivation of $\psi$ function for all criteria.

\clearpage
\section{Proof of Theorem 1 (Asymptotic equivalence of criteria)} \label{appendix:proof_theorem1}

\noindent  Let us first bound $\|\widehat{\R}\R^{-1}-\I\|_2$ by a sum of two terms: model mismatch and sampling error.

\indent Write
\begin{equation}
\label{eq:model_plus_sampling}
\|\widehat{\R}\R^{-1}-\I\|_2 \;\le\; \|\bbar\R \R^{-1}-\I\|_2 \;+\; \|(\widehat\R-\bbar\R)\R^{-1}\|_2 .
\end{equation}
Denote the r.h.s of \eqref{eq:model_plus_sampling} by $\varepsilon$. For $\R\in\mathcal B_{\mathrm{AIRM}}(\bbar\R,\rho)$, the generalized eigenvalues of $(\R,\bbar\R)$ lie in $[e^{-\rho},e^\rho]$, hence
\begin{equation}
\label{eq:bounded_by_ball}
\|\bbar\R \R^{-1}-\I\|_2=\|\S^{-1}-\I\|_2\le e^\rho-1,
\end{equation}
\begin{equation}
\label{eq:norm_invR}
\|\R^{-1}\|_2\le \|\bbar\R^{-1}\|_2\,\|\S^{-1}\|_2\le \|\bbar\R^{-1}\|_2\,e^\rho,
\end{equation}
where $\S:=\bbar\R^{-1/2} \R\,\bbar\R^{-1/2}$.
For Gaussian samples, the spectral concentration inequality (\cite{vershynin2018high}, see Theorem 4.7.1 or Exercise 4.7.3) yields, with probability $\ge 1-\delta$,
\begin{equation}
\label{eq:by_vershynin}
\|\widehat\R-\bbar\R\|_2 \ \le\ \|\bbar\R\|_2\Big(\sqrt{\tfrac{M+\log(2/\delta)}{N}}+\tfrac{M+\log(2/\delta)}{N}\Big).
\end{equation}
Expanding \eqref{eq:model_plus_sampling}, we get:
\begin{equation}
\begin{aligned}
\label{eq:epsilon_ineq_N}
&\varepsilon &\;=\;& \|\bbar\R \R^{-1}-\I\|_2 \;+\; \|(\widehat\R-\bbar\R)\R^{-1}\|_2 \\ 
&\quad &\;\le\;& \|\bbar\R \R^{-1}-\I\|_2 \;+\; \|(\widehat\R-\bbar\R)\|_2\|\R^{-1}\|_2 \\
&\quad &\;\le\;&
\ e^\rho-1\;+\;\|(\widehat\R-\bbar\R)\|_2\|\bbar\R^{-1}\|_2\,e^\rho \\
&\quad &\;\le\;&
\ e^\rho-1\;+\;\kappa(\bbar\R)\,e^\rho\,\sqrt{\tfrac{M+\log(2/\delta)}{N}}
\end{aligned}
\end{equation}
where we used the submultiplicativity of the operator norm for the first inequality, \eqref{eq:bounded_by_ball} and \eqref{eq:norm_invR} for the second inequality, and \eqref{eq:by_vershynin} by discarding the lower-order linear term (negligible when $N \gg M$) in the last inequality.

Solving \eqref{eq:epsilon_ineq_N} for $N$ yields the $N_0(\varepsilon,\delta,\rho)$ stated in the theorem.

\bigskip\noindent Next, we use Taylor expansions and bound the absolute difference between the criteria.

\indent Let $x = \lambda - 1$ with $|x|\le\varepsilon<1$. Note that:
\begin{equation}
\psi_{\mathrm{AMV}}(x) = x^2,
\end{equation}
And that
\begin{equation}
\begin{aligned}
&\psi_{\mathrm{SPICE}}(x) &\;=\;& \Big(\sqrt{1+x}-\frac{1}{\sqrt{1+x}}\Big)^{\!2}
= \frac{x^2}{1+x} \\
&\quad &\;=\;&x^2 - \frac{x^3}{x+1} \\
&\quad&\;=\;&x^2 - x^3 +\frac{x^4}{1+x}.
\end{aligned}
\end{equation}

The 3-rd order Taylor expansion around zero of $\psi_{\mathrm{AIRM}}(x)$ is by:
\begin{equation}
    \psi_{\mathrm{AIRM}}(x)=\log^2(1+x) = x^2 -x^3 + r^{(\mathrm{AIRM})}_3(x), \label{eq:taylor_airm_3rd}
\end{equation}
with the remainder defined for some $\xi\in(0,x)$ by:
\begin{equation}
r^{(\mathrm{AIRM})}_3(x)= x^4 \cdot \frac{11 - 6\log(1+\xi)}{12(1+\xi)^4},
\end{equation}
hence: 
\begin{equation}    
|r^{(\mathrm{AIRM})}_3(x)| \le \varepsilon^4\cdot\frac{11+6|\log(1-\varepsilon)|}{12(1-\varepsilon)^4}.
\end{equation}    

To align the leading quadratic term with the other criteria, we develop the 3-rd order Taylor expansion around zero of $8\psi_{\mathrm{JBLD}}(x)$. It is expressed by:
\begin{equation}
\begin{aligned}
&8\psi_{\mathrm{JBLD}}(x) &\;=\;& 8\log(1+\tfrac{x}{2})-4\log(1+x) \\
& &\;=\;& x^2 - x^3 + r_3^{(\mathrm{JBLD})}(x)
\end{aligned}
\end{equation}
with the remainder defined for some $\xi\in(0,x)$ by:
\begin{equation}
r_3^{(\mathrm{JBLD})}(x) = x^4 \cdot \left(\frac{1}{(\xi+1)^4} - \frac{1}{(\xi+2)^4} \right)
\end{equation}
hence: 
\begin{equation}    
|r_3^{(\mathrm{JBLD})}(x)| \le \varepsilon^4 \cdot \left(\frac{1}{(1-\varepsilon)^4} + \frac{1}{(2-\varepsilon)^4} \right)
\end{equation}

We now derive the per-eigenvalue bounds and then sum over $m$. Each of $\big\{\mathrm{AMV},\mathrm{AIRM},\mathrm{JBLD}\big\}$ will be compared to $\mathrm{SPICE}$ criterion.

\bigskip\noindent\underline{(i) AMV vs.\ SPICE:}\\
\indent Compute
\begin{equation*}
\begin{aligned}
&|\psi_{\mathrm{AMV}}(x) - \psi_{\mathrm{SPICE}}(x)| = |\frac{x^3}{x+1}| \le \varepsilon^3\cdot\frac{1}{1-\varepsilon}.
\end{aligned}
\end{equation*}
Define $C_{\mathrm{AMV}}(\varepsilon) =\frac{1}{1-\varepsilon}$, which converges to a finite constant when $\varepsilon \to 0$. 
Now, bounding criteria absolute difference using the triangle inequality:
\begin{equation}
\begin{aligned}
&\big|\mathcal{D}_{\mathrm{AMV}}^2-\mathcal{D}_{\mathrm{SPICE}}^2\big| \ &\;\le\;& \sum_m{|\psi_{\mathrm{AMV}}(\lambda_m) - \psi_{\mathrm{SPICE}}(\lambda_m)|}\\ 
&\ &\;\le\;&  M\cdot C_{\mathrm{AMV}}(\varepsilon) \cdot \varepsilon^3
\end{aligned}
\end{equation}

\bigskip\noindent\underline{(ii) AIRM vs.\ SPICE:}\\
\indent Compute
\begin{equation*}
\begin{aligned}
&|\psi_{\mathrm{AIRM}}(x) - \psi_{\mathrm{SPICE}}(x)| &\;=\;& |r^{(\mathrm{AIRM})}_3(x) - \frac{x^4}{x+1}| \\
&\quad &\;\le\;& \varepsilon^4 \cdot \left( \frac{11+6|\log(1-\varepsilon)|}{12(1-\varepsilon)^4} + \frac{1}{1-\varepsilon}\right)
\end{aligned}
\end{equation*}
Define $C_{\mathrm{AIRM}}(\varepsilon) =\left( \frac{11+6|\log(1-\varepsilon)|}{12(1-\varepsilon)^4} + \frac{1}{1-\varepsilon}\right)$, which converges to a finite constant when $\varepsilon \to 0$. 
Now, bounding criteria absolute difference using the triangle inequality:
\begin{equation}
\begin{aligned}
&\big|\mathcal{D}_{\mathrm{AIRM}}^2-\mathcal{D}_{\mathrm{SPICE}}^2\big| \ &\;\le\;& \sum_m{|\psi_{\mathrm{AIRM}}(\lambda_m) - \psi_{\mathrm{SPICE}}(\lambda_m)|}\\ 
&\ &\;\le\;&  M\cdot C_{\mathrm{AIRM}}(\varepsilon) \cdot \varepsilon^4
\end{aligned}
\end{equation}

\bigskip\noindent\underline{(iii) JBLD vs.\ SPICE:}\\
\indent Compute
\begin{equation*}
\begin{aligned}
&|8\psi_{\mathrm{JBLD}}(x) - \psi_{\mathrm{SPICE}}(x)| &\;=\;& |r^{(\mathrm{JBLD})}_3(x) - \frac{x^4}{x+1}| \\
&\quad &\;\le\;& \varepsilon^4 \cdot \left( \frac{1}{(1-\varepsilon)^4} + \frac{1}{(2-\varepsilon)^4} + \frac{1}{1-\varepsilon}\right)
\end{aligned}
\end{equation*}

Define $C_{\mathrm{JBLD}}(\varepsilon) = \left( \frac{1}{(1-\varepsilon)^4} + \frac{1}{(2-\varepsilon)^4} + \frac{1}{1-\varepsilon}\right)$, which converges to a finite constant when $\varepsilon \to 0$. 
Now, bounding criteria absolute difference using the triangle inequality:
\begin{equation}
\begin{aligned}
&\big|8\mathcal{D}_{\mathrm{JBLD}}^2-\mathcal{D}_{\mathrm{SPICE}}^2\big| \ &\;\le\;& \sum_m{|\psi_{\mathrm{JBLD}}(\lambda_m) - \psi_{\mathrm{SPICE}}(\lambda_m)|}\\ 
&\ &\;\le\;&  M\cdot C_{\mathrm{JBLD}}(\varepsilon) \cdot \varepsilon^4
\end{aligned}
\end{equation}

Combining the bounds (i)–(iii) yields the stated inequalities, which completes the proof.
\qed

\clearpage
\section{Proof of Theorem 2 (Criteria robustness to eigenvalue deviations)} \label{appendix:proof_theorem2}
To exploit monotonicity, we work in the log–eigenvalue coordinates:
\[
u_m \triangleq \tfrac12 \log \lambda_m
\]
so that $\lambda_m=e^{2u_m}$.
The scalar penalties in \eqref{eq:psi_functions} can be written as functions of $u$:
\begin{equation}
\begin{aligned}    
&\psi_{\mathrm{AMV}}(\lambda)= (e^{2u}-1)^2, \;
&\psi_{\mathrm{SPICE}}(\lambda)= 4\sinh^2 u,\\
&\psi_{\mathrm{AIRM}}(\lambda)= 4u^2, \;
&\psi_{\mathrm{JBLD}}(\lambda)= \log(\cosh u).
\end{aligned}
\end{equation}

Denote $U_{\max} \triangleq  \tfrac12\max\{\log(1+\varepsilon),-\log(1-\varepsilon)\}$. Notice that by theorem assumptions, for $m \ne r$:
\[
|u_m| \le U_{\max} \le u_r.
\]

\noindent Now, we derive a useful equivalence that will aid the proof. 

Set for each criterion:
\begin{equation}
S^{(\bullet)}\!\triangleq\!\sum_{m\neq r}\psi_{\bullet}(\lambda_m). 
\end{equation}

Observe that $\psi_{\bullet}(\lambda) \ge 0$ for all criteria, which can be verified algebraically and is illustrated in Figure~\ref{fig:psi_functions}. Consequently, $S^{(\bullet)} \ge 0$ holds for all criteria.

For any two criteria $A$ and $B$,
$
s_r^{(A)} \le s_r^{(B)}
\quad\Longleftrightarrow\quad
\frac{\psi_{A}(\lambda_r)}{S^{(A)}} \le \frac{\psi_{B}(\lambda_r)}{S^{(B)}} .
$
Equivalently,
\begin{equation}\label{eq:equiv-star}
s_r^{(A)} \le s_r^{(B)}
\quad\Longleftrightarrow\quad
\frac{S^{(B)}}{S^{(A)}} \le \frac{\psi_B(\lambda_r)}{\psi_A(\lambda_r)} .
\end{equation}
with the convention that if $S^{(A)}=S^{(B)}=0$ then $s_r^{(A)}=s_r^{(B)}=1$ and the inequality holds with equality.

\bigskip
\noindent Next, we show twp monotonicity lemmas on the log–scale.

Observe the following two lemmas.
\begin{enumerate}
\item[(L1)] For $u\in\mathbb{R}$, define the ratio:
\[
\displaystyle
\alpha(u)\triangleq \frac{\psi_{\mathrm{SPICE}}(\lambda)}{\psi_{\mathrm{AIRM}}(\lambda)}
=\Big(\frac{\sinh u}{u}\Big)^{\!2}, \; \text{with $\alpha(0)\triangleq1$.}
\]
Then $\alpha(u)$ is strictly increasing in $|u|$.

\smallskip\noindent\textit{Proof.} \\
For $u>0$, define $f(u)=\sinh u/u$. Then,
$f'(u)=\frac{u\cosh u-\sinh u}{u^2}$.
The numerator $g(u)=u\cosh u-\sinh u$ satisfies $g'(u)=u\sinh u>0$ and $g(0)=0$,
so $g(u)>0$ for $u>0$. Hence $f'(u)>0$ and $\alpha(u)=f(u)^2$ increases on $u>0$. Notice that $\alpha(u)$ is an even function, hence $\alpha(u)$ increases in $|u|$.

\item[(L2)] For $u\in\mathbb{R}$, define the ratio:
\[
\beta(u)\triangleq \frac{\psi_{\mathrm{JBLD}}(\lambda)}{\psi_{\mathrm{AIRM}}(\lambda)}
=\frac{\log(\cosh u)}{4u^2}, \; \text{with $\beta(0)\triangleq1/8$.}
\]
Then $\beta(u)$is strictly decreasing in $|u|$.

\smallskip\noindent\textit{Proof.} \\
For $u>0$,
$\beta'(u)=\frac{u\tanh u-2\log(\cosh u)}{4u^3}$.
Define the numerator $h(u) \triangleq u\tanh u-2\log(\cosh u)$. We got $h'(u) = u\text{sech}^2 u - \tanh u$ and $h''(u) = -2u\cdot\text{sech}^2u\cdot\tanh u$.
For $u>0$, $h''(u)$ is negative, and $h'(0) = 0$, it follows that $h'(u) < 0$. Hence, for $u>0$, $h(u)$ is strictly decreasing with $h(0)=0$, so $h(u)<0$ and thus $\beta'(u) < 0$. Notice that $\beta(u)$ is an even function, hence $\beta$ decreases in $|u|$.
\end{enumerate}

\bigskip\noindent We are now ready to prove the following inequalities

\bigskip\noindent\underline{(i) $s_r^{(\mathrm{AMV})} \ge s_r^{(\mathrm{SPICE})}$:}\\
\indent Notice that
\[
\psi_{\mathrm{AMV}}(\lambda_m)=(\lambda_m-1)^2=\lambda_m(\sqrt\lambda_m-\frac{1}{\sqrt\lambda_m})^2 = \lambda_m\,\psi_{\mathrm{SPICE}}(\lambda_m),
\]
for all $m$. Hence
\[
\frac{S^{(\mathrm{AMV})}}{S^{(\mathrm{SPICE})}}
=\frac{\sum_{m\neq r}\lambda_m\,\psi_{\mathrm{SPICE}}(\lambda_m)}{\sum_{m\neq r}\psi_{\mathrm{SPICE}}(\lambda_m)}
\le \max_{m\neq r}\lambda_m \le \lambda_r,
\]
since, $\max_{m\neq r}\lambda_m \le 1+\varepsilon \le 1+\Delta = \lambda_r$.

Therefore,
\[
\frac{S^{(\mathrm{AMV})}}{S^{(\mathrm{SPICE})}} \le \lambda_r = \frac{\psi_{\mathrm{AMV}}(\lambda_r)}{\psi_{\mathrm{SPICE}}(\lambda_r)},
\]
and by \eqref{eq:equiv-star} (with $B=\mathrm{AMV}$, $A=\mathrm{SPICE}$)
we conclude $s_r^{(\mathrm{AMV})}\ge s_r^{(\mathrm{SPICE})}$.

\bigskip\noindent\underline{(ii) $s_r^{(\mathrm{SPICE})} \ge s_r^{(\mathrm{AIRM})}$:}\\
\indent For each $m\neq r$, by (L1), evenness of $\alpha(u)$ and $ |u_m|\le U_{\max}\le u_r$:
\[
\frac{\psi_{\mathrm{SPICE}}(\lambda_m)}{\psi_{\mathrm{AIRM}}(\lambda_m)}
= \alpha(u_m) = \alpha(|u_m|) \le \alpha(u_r)
= \frac{\psi_{\mathrm{SPICE}}(\lambda_r)}{\psi_{\mathrm{AIRM}}(\lambda_r)}.
\]
Multiplying by $\psi_{\mathrm{AIRM}}(\lambda_m)\ge 0$ and summing over $m\neq r$ yields
\[
S^{(\mathrm{SPICE})} \le
\frac{\psi_{\mathrm{SPICE}}(\lambda_r)}{\psi_{\mathrm{AIRM}}(\lambda_r)}\,S^{(\mathrm{AIRM})}.
\]
Equivalently (for non-degenerate case where $S^{(\mathrm{AIRM})}>0$:
\[
\dfrac{S^{(\mathrm{SPICE})}}{S^{(\mathrm{AIRM})}} \le
\dfrac{\psi_{\mathrm{SPICE}}(\lambda_r)}{\psi_{\mathrm{AIRM}}(\lambda_r)}.
\]
Applying \eqref{eq:equiv-star} (with $B=\mathrm{SPICE}$, $A=\mathrm{AIRM}$) gives
$s_r^{(\mathrm{SPICE})}\ge s_r^{(\mathrm{AIRM})}$.

\bigskip\noindent\underline{(iii) $s_r^{(\mathrm{AIRM})} \ge s_r^{(\mathrm{JBLD})}$:}\\
\indent For each $m\neq r$, by (L2), evenness of $\beta(u)$ and $ |u_m|\le U_{\max}\le u_r$:
\[
\frac{\psi_{\mathrm{JBLD}}(\lambda_m)}{\psi_{\mathrm{AIRM}}(\lambda_m)}
= \beta(u_m) = \beta(|u_m|) \ge \beta(u_r)
= \frac{\psi_{\mathrm{JBLD}}(\lambda_r)}{\psi_{\mathrm{AIRM}}(\lambda_r)}.
\]
Multiplying by $\psi_{\mathrm{AIRM}}(\lambda_m)\ge 0$ and summing over $m\neq r$ gives
\[
S^{(\mathrm{JBLD})} \ge
\frac{\psi_{\mathrm{JBLD}}(\lambda_r)}{\psi_{\mathrm{AIRM}}(\lambda_r)}\,S^{(\mathrm{AIRM})}.
\]
Equivalently,
\[
\dfrac{S^{(\mathrm{JBLD})}}{S^{(\mathrm{AIRM})}} \ge
\dfrac{\psi_{\mathrm{JBLD}}(\lambda_r)}{\psi_{\mathrm{AIRM}}(\lambda_r)}.
\]
Applying \eqref{eq:equiv-star} (with $B=\mathrm{AIRM}$, $A=\mathrm{JBLD}$) yields
$s_r^{(\mathrm{AIRM})}\ge s_r^{(\mathrm{JBLD})}$.

Combining (i),(ii) and (iii) concludes the proof.
\qed

\end{document}